\newtheorem{thm}{Theorem}
\newtheorem{prop}{Proposition}
\newtheorem{exmp}{Example}
\newtheorem{rmks}{Remark}
\newcommand{\R}{\mathbb{R}}
\newcommand{\Z}{\mathbb{Z}}
\newcommand{\N}{\mathbb{N}}
\newcommand{\rmdim}{\mathrm{dim}}
\newcommand{\ect}{\mathrm{ECT}}
\newcommand{\wect}{\mathrm{WECT}}
\ifcvprfinal\pagestyle{empty}\fi
\begin{document}

\title{The Weighted Euler Curve Transform for Shape and Image Analysis}

\author{
Qitong Jiang \\
The Ohio State University \\
Department of Mathematics\\
{\tt\small jiang.927@osu.edu}
\and
Sebastian Kurtek\\
The Ohio State University\\
Department of Statistics \\
{\tt\small kurtek.1@stat.osu.edu}
\and
Tom Needham \\
Florida State University \\
Department of Mathematics\\
{\tt\small tneedham@fsu.edu}
}

\maketitle

\begin{abstract}
    The Euler Curve Transform (ECT) of Turner et al.\ is a complete invariant of an embedded simplicial complex, which is amenable to statistical analysis. We generalize the ECT to provide a similarly convenient representation for weighted simplicial complexes, objects which arise naturally, for example, in certain medical imaging applications. We leverage work of Ghrist et al.\ on Euler integral calculus to prove that this invariant---dubbed the Weighted Euler Curve Transform (WECT)---is also complete. We explain how to transform a segmented region of interest in a grayscale image into a weighted simplicial complex and then into a WECT representation. This WECT  representation is applied to study Glioblastoma Multiforme brain tumor shape and texture data. We show that the WECT representation is effective at clustering tumors based on qualitative shape and texture features and that this clustering correlates with patient survival time.
\end{abstract}

\section{Introduction}

Tools from algebraic topology have become increasingly popular in shape analysis applications over the past several years. At an intuitive level, the topological perspective is appealing because algebraic topology is, at its core, designed to extract tractable algebraic invariants from complex shape data. The dominant technique in topological shape analysis is \emph{persistent homology}, which summarizes multiscale topological features of a shape, where scale is measured relative to some \emph{filtration function}. Roughly, for a continuous function $f:X \rightarrow \R$ on a topological space $X$ (satisfying certain tameness conditions), one computes the degree-$k$ homology of the sublevel sets $f^{-1}((-\infty,r])$ and tracks ``births" and ``deaths" of homological features as the filtration value $r$ is increased. This produces a summary statistic for the pair $(X,f)$ called a \emph{persistence diagram} (see standard references \cite{edelsbrunner2010computational,carlsson2014topological}), which can be used as a proxy for $X$ in shape analysis applications. This approach has been taken in several shape analysis tasks, with shape data coming from cortical surfaces \cite{chung2009persistence}, brain artery systems \cite{bendich2016persistent}, proteins \cite{kovacev2016using} and leaf contours \cite{patrangenaru2018challenges}. 
While the persistence diagram of a pair $(X,f)$ provides a computationally tractable shape summary, the complex structure of the invariant means that it is difficult to incorporate into statistical models. A simpler invariant is the \emph{Euler curve} of $(X,f)$; this is an integer-valued function on $\R$ whose value at $r$ is the Euler characteristic (i.e., the alternating sum of ranks of the homology groups) of the sublevel set $f^{-1}((-\infty,r])$. 

Given shape data, one must answer the question of which filtration function to apply in order to apply these topological methods. For a shape represented as a simplicial complex $K$ embedded in a Euclidean space $\R^d$, recent work has advocated for using an ensemble of filtration functions given by the height function along directions sampled from the unit sphere $S^{d-1}$ \cite{turner2014persistent,ghrist2018persistent,fasy2018challenges,curry2018many,betthauser2018topological,crawford2019predicting,fasy2019persistence}. The collection of all persistence diagrams for these height filtrations is referred to as the persistent homology transform of $K$. Likewise, the collection of Euler curves for all filtration directions is called the Euler curve transform (ECT) for $K$. The ECT provides a particularly attractive shape representation, as its simplistic structure allows it to be easily incorporated into statistical models. This was the approach taken in \cite{crawford2019predicting}, where the ECTs for Glioblastoma Multiforme (GBM) brain tumor shapes were used as covariates in a model for survival prediction. 

In this paper, we consider a variant of the ECT, which we dub the \emph{weighted Euler Characteristic Transform} (WECT). This object is defined for shape data consisting of an embedded simplicial complex $K$ endowed with an extra weighting function $g$. The pair $(K,g)$ is referred to as a \emph{weighted simplicial complex}. The WECT invariant incorporates both the shape of $K$ and the weighting function $g$ into a topological summary. Our motivation for defining this summary also comes from analysis of brain tumor data, which is naturally given as a segmented grayscale image. The segmented shape is used to construct a simplicial complex $K$ embedded in $\R^2$, and the grayscale pixel values inside the shape define the weight function $g$. While the WECT is a simple generalization of the ECT, it is able to efficiently incorporate vital information that is ignored by the ECT.

\subsection{Contributions and Organization of Paper}

The proposed mathematical framework is laid out in detail in Section \ref{sec:mathematical_framework}. There, we give a precise definition of the WECT as a generalization of ideas appearing in \cite{turner2014persistent,betthauser2018topological}. We show that recent work of Ghrist, Levanger and Mai implies that the WECT is a complete descriptor of weighted simplicial complexes, i.e., two weighted simplicial complexes have the same WECT if and only if they are equal. In this section, we also provide comparisons between the WECT and other techniques appearing in the topological shape analysis literature. In Section \ref{sec:applications}, we demonstrate some applications of the WECT framework. We begin with a toy example exploring the utility of the WECT in classifying and registering MNIST digit images. Next, we explore a real application wherein we study the shape and appearance of Glioblastoma Multiforme tumors using WECT representations. Using a simple distance-based clustering scheme, we are able to distinguish clusters of tumors with low survival times, purely from imaging data. Open source code for producing and analyzing WECTs has been made publicly available \cite{code}.

\section{Mathematical Framework}\label{sec:mathematical_framework}

In this section, we lay out the mathematical framework for the WECT. We begin by reviewing some basic definitions in order to set notation.

\subsection{Simplicial Complexes and the Euler Characteristic}

Let $K$ be a \emph{simplicial complex} embedded in some Euclidean space $\R^d$. That is, $K$ is a set of embedded \emph{simplices} $\sigma$. Each $\sigma$ is the convex hull of a set of $k+1$ points in general position in $\R^d$, where $k \leq d$ is the \emph{dimension} of the simplex; we write $k = \rmdim(\sigma)$. For example, a $0$-dimensional simplex is a point, a $1$-dimensional simplex is a closed line segment and a $2$-dimensional simplex is a triangle. The $k$ points defining $\sigma$ are called its \emph{vertices}. The convex hull of $\ell < k$ of these vertices is also a simplex of $K$ and is called an \emph{$\ell$-dimensional face} of $\sigma$. If $\tau$ is a face of $\sigma$, we write $\tau < \sigma$. If $\sigma$ and $\tau$ are simplices of $K$, we require that $\sigma \cap \tau$ is also a simplex of $K$. The maximum dimension of a simplex in $K$ is called the \emph{dimension} of $K$, denoted $\rmdim(K)$. A collection of simplices of $K$ which itself forms a simplicial complex is called a \emph{subcomplex} of $K$. The union of all simplices of $K$ of dimension less than or equal to $\ell$ is a subcomplex called the \emph{$\ell$-skeleton} of $K$, denoted $K^{ \leq \ell}$. The set of simplices of $K$ of dimension exactly $\ell$ is denoted $K^{\ell}$; note that $K^{\ell}$ is not a simplicial complex in general.

Abusing notation, we will alternate between treating each embedded simplicial complex as a combinatorial object (a set of simplices) and as a geometric object (a set of points in $\R^d$). We hope that the interpretation should always be clear from context.

\begin{figure}
    \centering
    \includegraphics[width = 0.22\textwidth]{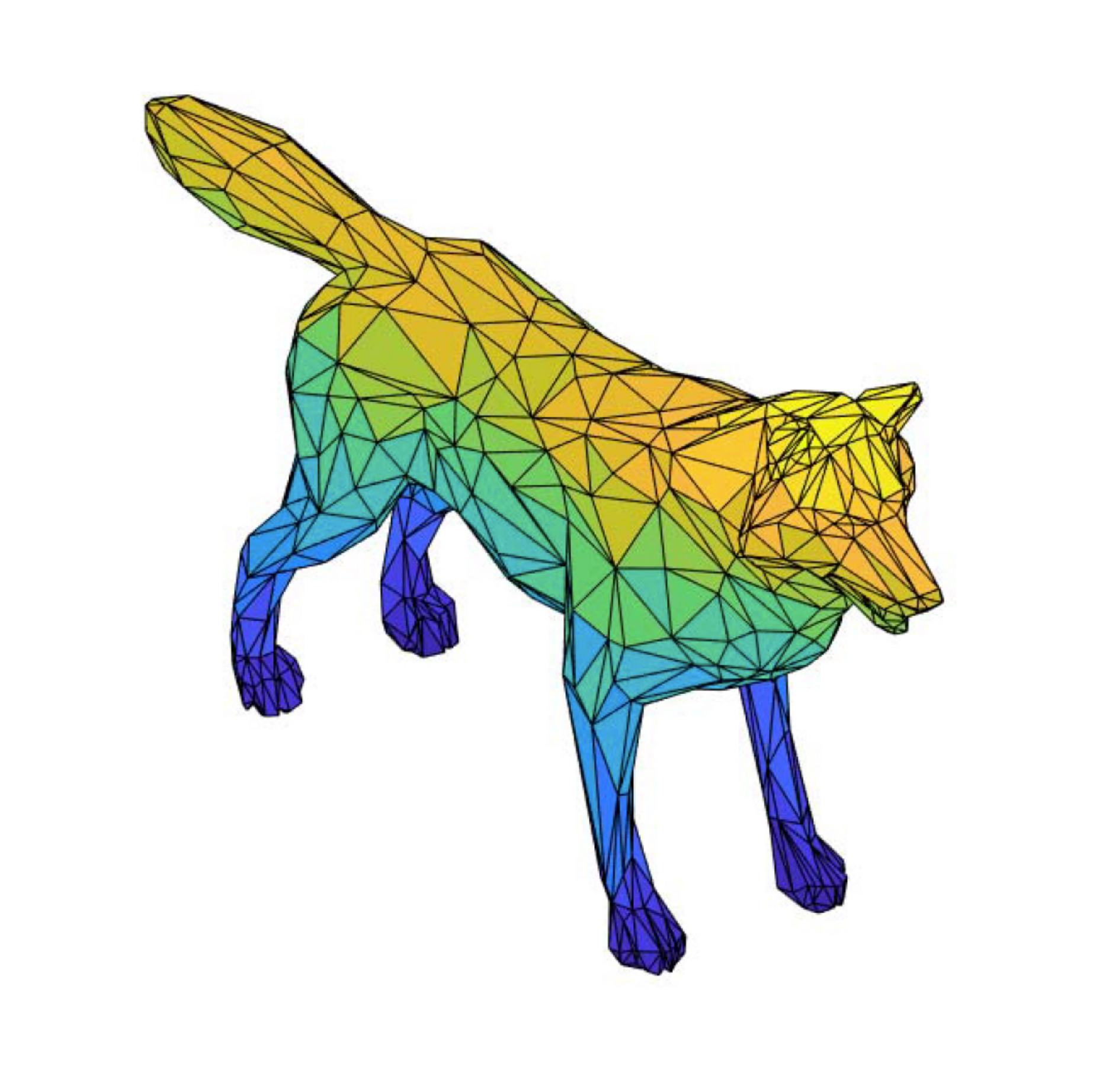}
    \includegraphics[width = 0.18\textwidth]{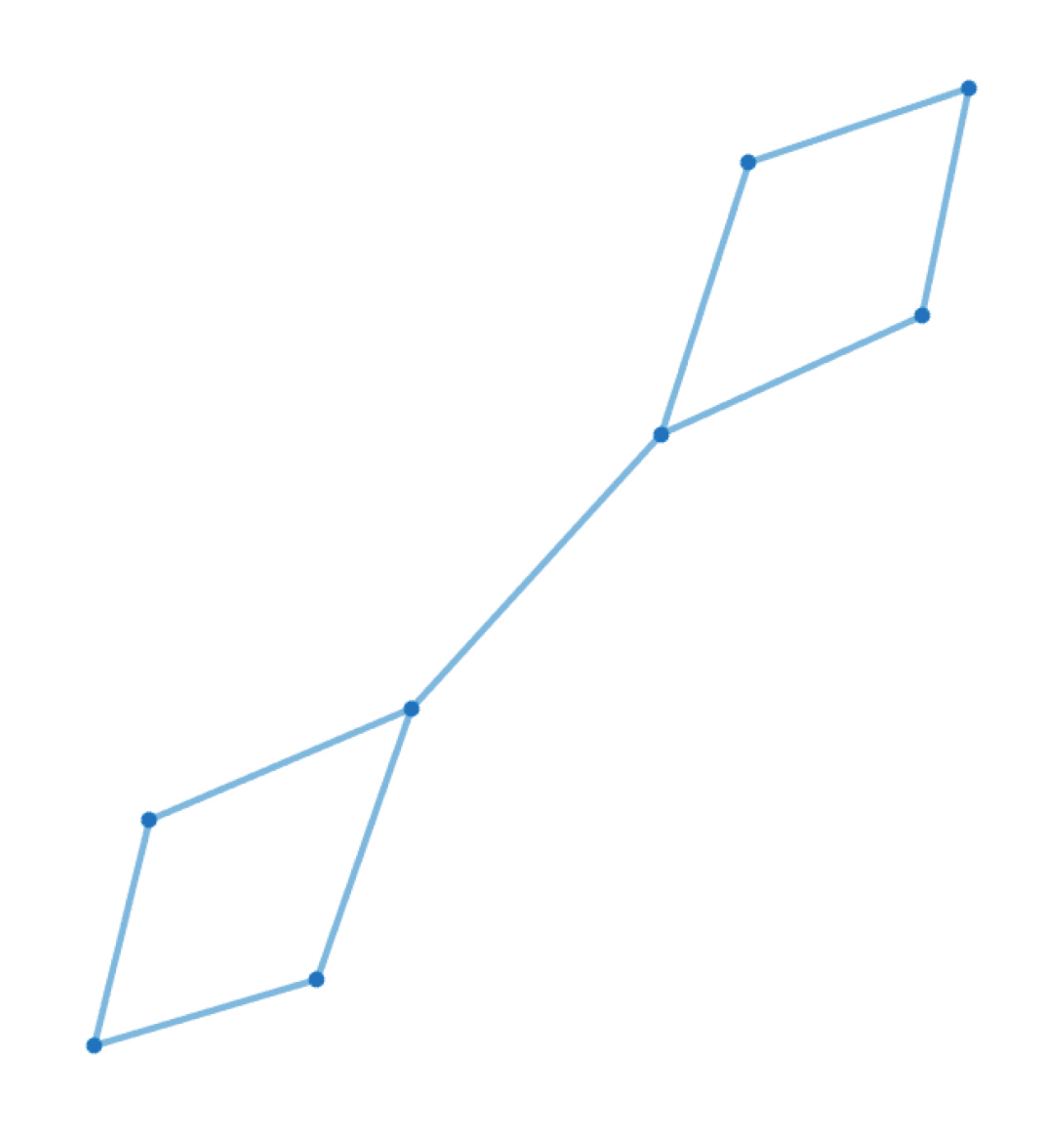}
    \caption{Examples of embedded simplicial complexes commonly arising in computer vision. A triangulated surface is a two-dimensional simplicial complex embedded in $\R^3$. An embedded planar graph is a 1-dimensional simplicial complex in $\R^2$.}
    \label{fig:simplicial_complexes}
\end{figure}
A simple combinatorial invariant of a simplicial complex is its \emph{Euler characteristic}, denoted $\chi(K)$. The Euler characteristic is defined as
\begin{equation*}
\chi(K) = \sum_{d = 0}^{\rmdim(K)} (-1)^{d} \cdot  \# K^d,
\end{equation*}
where $\# A$ will generally be used to denote the cardinality of a set $A$. The concept of the Euler characteristic generalizes to more flexible classes of spaces, and it is a basic fact of algebraic topology that $\chi$ is a homotopy equivalence invariant. Simplicial complexes form a convenient category for computation, since they can be represented abstractly in a purely combinatorial way by keeping track of all simplices and their inclusions. In this paper, we are focused on the geometrically motivated case where are simplicial complexes are specified by an embedding into a Euclidean space. While not strictly necessary, the invariants we describe are most interesting when $K \subset \R^d$ is a $d$-dimensional simplicial complex. Moreover, we restrict our attention to the finite setting, i.e., $\# K^\ell$ is finite for all $\ell$.

\subsection{Euler Curve Transform}

Consider a function $f:K \rightarrow \R$ as an assignment of a real number to each simplex of $K$, i.e., the function is constant along faces. The function is a \emph{filtration function} if each sublevel set $f^{-1}((-\infty, r])$ is a subcomplex of $K$. A filtration function induces a chain of inclusions of simplicial complexes $f^{-1}((-\infty,r_1]) \subset f^{-1}((-\infty,r_2]) \subset \cdots \subset f^{-1}((-\infty,r_n])$, where $r_1 < r_2 < \cdots < r_n$ are the finitely many (using the assumption that $K$ is finite) values in the range of $f$. From this data, one obtains the \emph{Euler curve} $\chi_f:\R \rightarrow \Z$ defined as $\chi_f(r) = \chi \left(f^{-1}((-\infty,r])\right)$.

\begin{figure*}[!t]
    \centering
    \includegraphics[width = 0.75\textwidth]{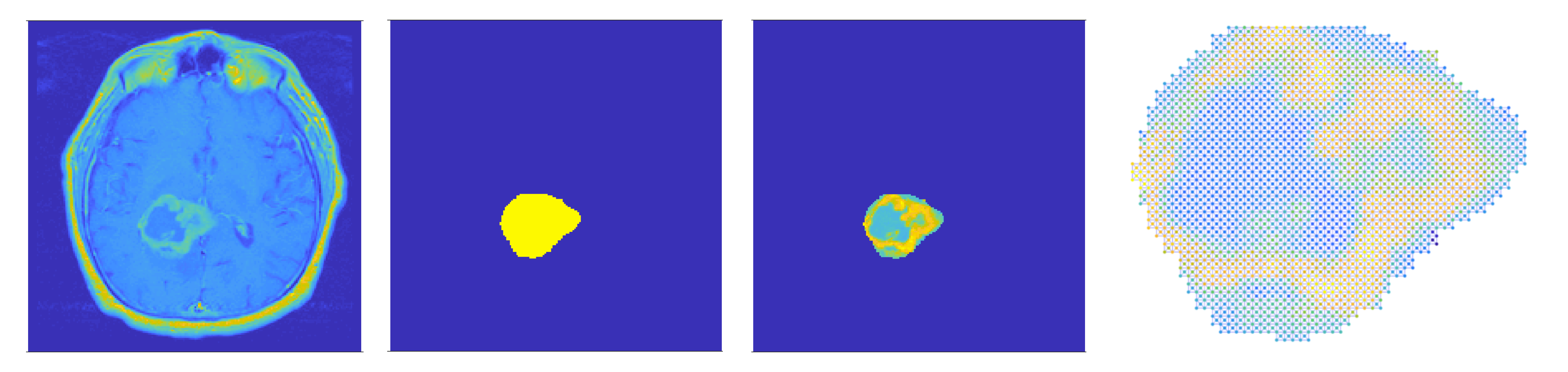}
    \caption{Glioblastoma multiforme tumor image data. From left to right: axial slice with largest tumor area selected from a 3D MRI image; binary tumor segmentation mask; segmented tumor image; weighted simplicial complex created from the segmented tumor image. Observe that the tumor shape data from the segmentation mask is enriched by the overlaid pixel value function extracted from the original image: the level sets of the pixel value function have interesting shape and topological features.}
    \label{fig:tumor_images}
\end{figure*}

Given data consisting of an embedded simplicial complex and a relevant function (or more general space and function where similar concepts can be defined), the Euler curve produces a multiscale topological summary which is amenable to classical analysis, and can be viewed as a simplification of the richer but more computationally taxing persistence diagram \cite{edelsbrunner2010computational,carlsson2014topological}. On the other hand, if a relevant function is not provided, one is left with the question of how to filter the simplicial complex.

It was observed in \cite{turner2014persistent} that for an embedded complex $K \subset \R^d$, there is a family of natural filtration functions: orthogonal projections onto the oriented one-dimensional subspaces of $\R^d$, which can be parameterized by the unit sphere $S^{d-1} \subset \R^d$. The \emph{Euler Curve Transform (ECT)} of an embedded simplicial complex $K \subset \R^d$ is the function $\ect_K: S^{d-1} \times \R \rightarrow \Z$ defined as
\begin{equation*}
\ect_K(v,r) = \chi_{p_v}(r),
\end{equation*}
with $p_v: K \rightarrow \R$ defined on the vertex set $K^0$ by the dot product
\begin{equation}\label{eqn:projection_1}
p_v(\sigma) = v \cdot \sigma.
\end{equation}
The function is extended inductively to higher-dimensional simplices as
\begin{equation}\label{eqn:projection_2}
p_v(\sigma) = \max \{p_v(\tau) \mid \tau < \sigma\}.
\end{equation}
In practical computations, one uses an approximation of the ECT given by sampling finitely many projection directions from $S^{d-1}$ and finitely many filtration values from $\R$.

One can also apply a smoothing operator to each single variable function $\ect_K(v,\cdot)$ to obtain the \emph{Smooth Euler Curve Transform (SECT)}. The SECT was applied in \cite{crawford2019predicting} to study Glioblastoma Multiforme tumor imaging data. In particular, the SECT served as a shape covariate in a Gaussian process regression model for survival prediction. Another variant of the ECT---very closely related to the one that we consider in subsequent sections---was applied in \cite{betthauser2018topological} to provide a topological signature for grayscale image data.

\subsection{Weighted Euler Characteristic}

Next, suppose that our data consists of an embedded simplicial complex $K \subset \R^d$ together with a function $g:K \rightarrow \N$, where $\N = \{1,2,\ldots\}$. We refer to the pair $(K,g)$ as a \emph{weighted simplicial complex}. The goal is to define a variant of $\ect_K$, which also incorporates data from $g$. We note that weighted simplicial complexes have already appeared in the literature in various contexts. To the best of our knowledge, they were first studied in \cite{dawson1990homology}, where a homology theory was developed. Abstract weighted simplicial complexes, i.e., those which do not come with a preferred embedding into a Euclidean space, serve as models for collaboration networks \cite{carstens2013persistent} and Vietoris-Rips complexes for weighted point clouds \cite{ren2018weighted}. We provide some examples of \emph{embedded} weighted simplicial complexes next.
\begin{exmp}\label{exmp:greyscale_images}
Our main motivating example comes from grayscale images containing a region of interest, e.g., a tumor image with a segmentation mask, which can be converted into weighted simplicial complexes using Algorithm \ref{alg:greyscale}. An example of this process is described in Figure \ref{fig:tumor_images}.
\end{exmp}

\begin{exmp}
Although the main examples considered in this paper will be of the form described in Example \ref{exmp:greyscale_images}, we note that there are many other situations where one might wish to consider weighted simplicial complexes. Given shape data as a simplicial complex $K$, one could consider the weight function $g$ as an annotation or measure of importance. For example, if $K$ is a complex representing a molecule shape, the weight function could be used to annotate different atom types. If $K$ is an anatomical surface, $g$ can be used to indicate regions of importance landmarked by a radiologist.
\end{exmp}

\begin{algorithm}[!t]
\caption{Grayscale Image to Weighted Complex}\label{alg:logmap}
\label{alg:greyscale}
\begin{algorithmic}[1]
\Function{ImageToWeightedComplex}{$A$}
\State \Comment{$A \in \N^{m \times n}$ greyscale image matrix} 
\State $V_{center} = \Call{find}{A \neq 0}$ \\
 \Comment{treat nonzero pixels as coords for vertices}
\State{$V = V_{center}$} \Comment{initialize vertex list}
\For{$v \in V_{center}$}\Comment{add corner vertices}
    \State append $v + [\pm 1/2, \pm 1/2]$ to $V$
\EndFor
\State{$V = \Call{unique}{V}$} \Comment{remove duplicates}
\State{$F = []$} \Comment{initialize face list}
\For{$v \in V_{center}$}
\State{append triangles containing $v$ to $F$}
\EndFor
\State{$E=$ all resulting edges}
\For{$f \in F$ containing $v \in V_{center}$}
\State{$Fw(f)=$ weight of corresponding pixel value} 
\EndFor
\For{$v \in V$}
\State{$Vw(v)=$ largest weight of face containing $v$} 
\EndFor
\For{$e \in E$}
\State{$Ew(e)=$ largest weight of face containing $e$} 
\EndFor
\State{\Return $V, E, F, Vw, Ew, Fw$}
\EndFunction
\end{algorithmic}
\end{algorithm}

For a simplicial complex $K$ and a function $g:K \rightarrow \N$, we define the \emph{weighted Euler characteristic}
$$
\chi^w(K,g) = \sum_{d = 0}^{\rmdim(K)} (-1)^d \sum_{\sigma \in K^d} g(\sigma).
$$
\begin{rmks}
If $g(\sigma) = 1$ for all $\sigma \in K$, then $\chi^w(K,g) = \chi(K)$. The weighted Euler characteristic is therefore a direct generalization of the classical version.
\end{rmks}
\begin{rmks}
The same definition essentially appears in \cite{betthauser2018topological}; the only difference is that only simplicial complexes which are finite axis-aligned lattices were considered there.
\end{rmks}
\begin{rmks}
A generalization of the weighted Euler characteristic is a classical object of study in algebraic geometry; see, e.g.,  \cite{kashiwara1985index}.
\end{rmks}

We are particularly interested in functions $g:K \rightarrow \N$ which satisfy the consistency condition $g(\tau) = \max \{g(\sigma) \mid \tau < \sigma\}.$
Note that this condition is satisfied by the construction given in Algorithm \ref{alg:greyscale}. If a function satisfies this condition, we say that it is \emph{admissible}. For functions of this type, the weighted Euler characteristic has a natural interpretation. 

\begin{prop}
Suppose that $g:K \rightarrow \N$ is an admissible function. Then, each superlevel set $g^{-1}([z,\infty))$ is a subcomplex of $K$. The weighted Euler characteristic $\chi(K,g)$ is the sum of Euler characteristics of all superlevel complexes of $g$; that is,
\begin{equation}\label{eqn:weighted_euler_characteristic}
\chi^w(K,g) = \sum_{z \in \N} \chi(g^{-1}([z,\infty))).
\end{equation}
\end{prop}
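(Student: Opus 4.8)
The plan is to dispatch the two assertions in order. For the first, that each superlevel set $g^{-1}([z,\infty))$ is a subcomplex, the only thing I need to extract from admissibility is the monotonicity statement: if $\tau < \sigma$, then $g(\tau) \geq g(\sigma)$. This is immediate, since $g(\tau) = \max\{g(\rho) \mid \tau < \rho\}$ and $\sigma$ itself is one of the $\rho$ appearing in that set. Granting this, whenever $\sigma \in g^{-1}([z,\infty))$ and $\tau < \sigma$ we get $g(\tau) \geq g(\sigma) \geq z$, so $\tau \in g^{-1}([z,\infty))$; thus the superlevel set is closed under passing to faces and is a subcomplex. This also guarantees that the right-hand side of the claimed identity consists of honest Euler characteristics of simplicial complexes, and I should note in passing that the sum over $z \in \N$ is finite, because $g^{-1}([z,\infty)) = \emptyset$ as soon as $z > \max_\sigma g(\sigma)$, using finiteness of $K$.

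For the identity itself, the idea is a Fubini-style interchange of summation, built on the observation that a fixed simplex $\sigma$ lies in $g^{-1}([z,\infty))$ precisely for the $g(\sigma)$ values $z \in \{1,\ldots,g(\sigma)\}$. Concretely, I would expand each term $\chi(g^{-1}([z,\infty)))$ via the simplex-counting definition of the Euler characteristic, writing $\#\big(g^{-1}([z,\infty)) \cap K^d\big) = \sum_{\sigma \in K^d} \mathbf{1}[g(\sigma) \geq z]$, and then swap the (finite) sums over $z$, $d$, and $\sigma$ to arrive at $\sum_d (-1)^d \sum_{\sigma \in K^d} \sum_{z \in \N} \mathbf{1}[g(\sigma) \geq z]$. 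Since $g$ takes values in $\N = \{1,2,\ldots\}$, the inner sum $\sum_{z \in \N} \mathbf{1}[g(\sigma) \geq z]$ counts the integers $z$ with $1 \leq z \leq g(\sigma)$, which is exactly $g(\sigma)$. Substituting yields $\sum_d (-1)^d \sum_{\sigma \in K^d} g(\sigma) = \chi^w(K,g)$, as claimed.

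There is no serious obstacle here; the argument is elementary. The two points that want a little care are justifying the interchange of summations — harmless, since $K$ is finite so every sum has only finitely many nonzero terms — and the bookkeeping that $\N$ starts at $1$, so the telescoping count $\sum_{z \in \N} \mathbf{1}[g(\sigma) \geq z]$ returns $g(\sigma)$ rather than $g(\sigma)+1$. It is also worth remarking that admissibility is used only to ensure the superlevel sets are subcomplexes: the numerical identity between the alternating $g$-weighted simplex counts and $\sum_{z} \chi(g^{-1}([z,\infty)))$ holds for an arbitrary $g : K \to \N$.
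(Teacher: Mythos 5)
Your proof is correct and follows essentially the same route as the paper's: the same face-monotonicity argument from admissibility for the subcomplex claim, and the same indicator-function/Fubini interchange with $\sum_{z \in \N} \mathbf{1}[g(\sigma) \geq z] = g(\sigma)$ for the identity. Your added remarks on finiteness of the sum and on admissibility being needed only for the first assertion are accurate but do not change the argument.
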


\begin{proof}
We first show that the superlevel sets are subcomplexes of $K$. It suffices to show that for any $\sigma \in g^{-1}([z,\infty))$ and $\tau < \sigma$, we have $\tau \in g^{-1}([z,\infty))$. This is easy to see from the definition of an admissible function, since $\tau < \sigma$ implies $g(\tau) \geq g(\sigma) \geq z$, which implies $\tau \in g^{-1}([z,\infty))$. It remains to show that Equation \eqref{eqn:weighted_euler_characteristic} is true. In what follows, for a logical statement $S$, let $\textbf{1}_S$ denote the indicator function taking the value $1$ if $S$ is true, and $0$ if $S$ is false. Then,
\begin{align*}
   &\sum_{z \in \N} \chi(g^{-1}([z,\infty)))
   \\
   & \qquad = \sum_{z \in \N} \sum_{d=0}^{\rmdim(K)} (-1)^d \# \{ \sigma \in K \mid g(\sigma) \geq z \}^d \\
   & \qquad = \sum_{d = 0}^{\rmdim(K)} (-1)^d \sum_{z \in \N} \# \{ \sigma \in K^d \mid g(\sigma) \geq z\} \\
   & \qquad = \sum_{d = 0}^{\rmdim(K)} (-1)^d \sum_{z \in \N} \sum_{\sigma \in K^d} \textbf{1}_{g(\sigma) \geq z} \\
   & \qquad = \sum_{d = 0}^{\rmdim(K)} (-1)^d \sum_{\sigma \in K^d} g(\sigma) = \chi^w(K,g).
\end{align*}
\end{proof}

\subsection{Weighted Euler Curve Transform}

We now define the \emph{Weighted Euler Curve Transform (WECT)} as a straightforward generalization of the ECT; the WECT is specifically designed to treat weighted simplicial complexes. Let $(K,g)$ be a weighted simplicial complex, and let $f:K \rightarrow \R$ be a filtration function. The \emph{weighted Euler curve} associated to $f$ is the function $\chi^w_f:\R \rightarrow \Z$ defined as
$$
\chi^w_f(r) = \chi^w(f^{-1}((-\infty,r]),g),
$$
where $g$ is understood by context to be the restriction of $g$ to the subcomplex $f^{-1}((-\infty,r])$. We then define the WECT of a weighted simplicial complex $(K,g)$ with $K \subset \R^d$ as the function $\wect_{K,g}:S^{d-1} \times \R \rightarrow \Z$ defined as
$$
\wect_{K,g} (v,r) = \chi^w_{p_v}(r),
$$
with $p_v$ the projection function as defined in Equations \eqref{eqn:projection_1} and \eqref{eqn:projection_2}. Clearly, if the weight function $g$ is constant and equal to one, then $\wect_{K,g} = \ect_K$. 

As in the case of the ECT, a WECT is represented in practice by sampling a finite number of directions on the sphere $S^{d-1}$. An example of a WECT is shown in Figure \ref{fig:MNIST_WECT}. As in \cite{crawford2019predicting}, when analyzing WECTs, we often preprocess them to improve robustness, by applying a smoothing operator. Unlike \cite{crawford2019predicting}, we do not specify a particular smoothing operation, and leave the particular method as a hyperparameter in the data analysis pipeline.

\begin{figure}[!t]
    \centering
    \includegraphics[width = 0.4\textwidth]{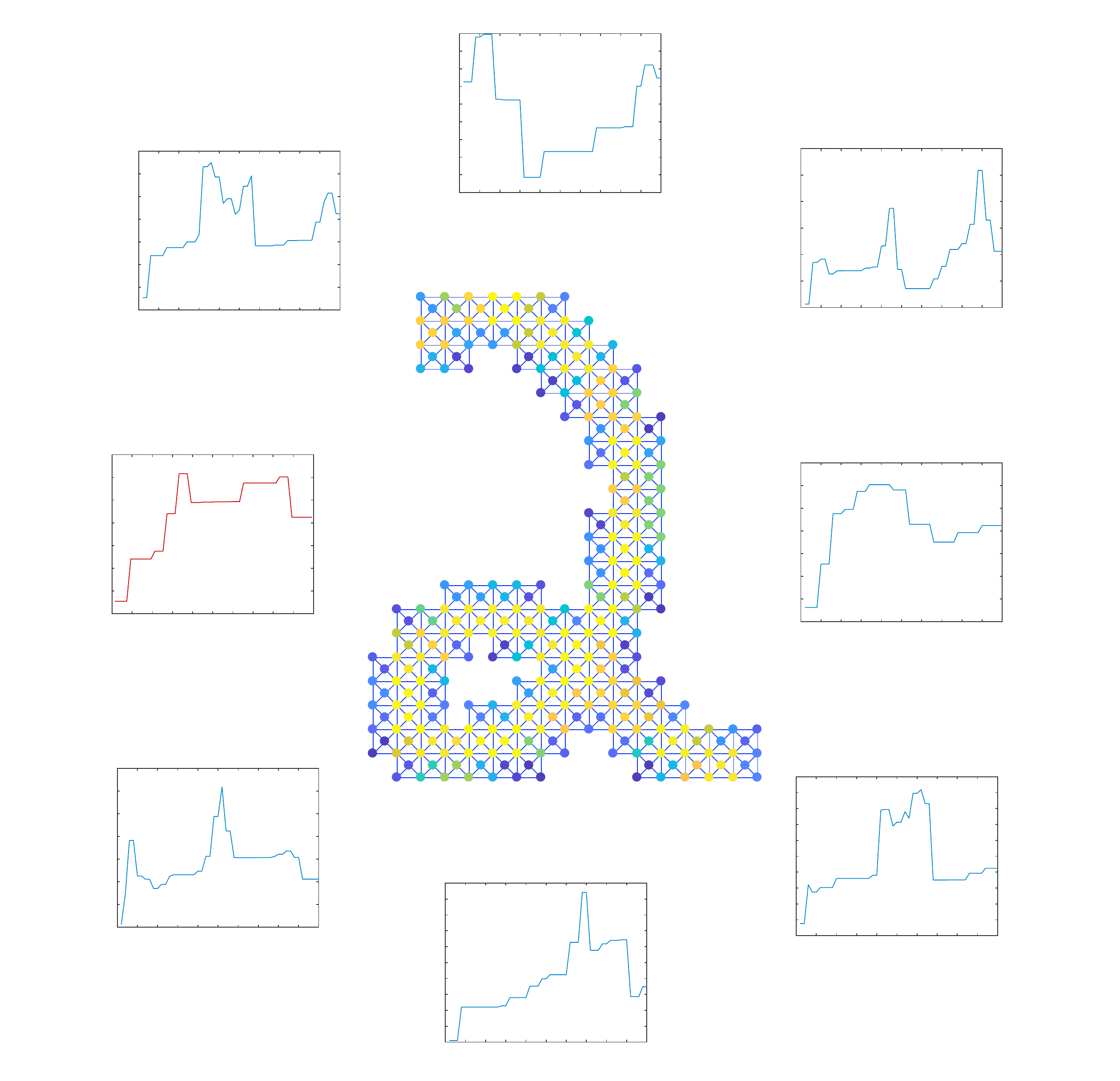}
    \caption{The WECT for a weighted simplicial complex constructed from an MNIST digit. Each panel shows a single weighed Euler curve, with the red curve on the left representing filtering by projection to the vector $(-1,0)$, and the other curves constructed similarly by projection onto other directions.}
    \label{fig:MNIST_WECT}
\end{figure}

\subsection{Distance Between WECTs}

The WECT of a weighted simplicial complex $(K,g)$ in $\R^d$ is naturally viewed as a family of integer-valued functions $\wect_{K,g}(v,\cdot):\R \rightarrow \Z$, parameterized by $S^{d-1}$. Since $K$ is assumed to be compact, each function is constant outside of a compact subset of $\R$, and we may restrict each function to this common compact domain; moreover, given a dataset of weighted simplicial complexes, one may assume without loss of generality that all WECT functions are defined on the same compact domain. After applying a smoothing operator, the smoothed WECT is likewise identified with a parameterized family of compactly supported functions of higher regularity. Any metric $d$ on such functional data gives rise to a metric on WECT data, by integrating the function 
$$
v \mapsto d(\wect_{K_1,g_1}(v,\cdot),\wect_{K_2,g_2}(v,\cdot))
$$
over $v \in S^{d-1}$ with respect to its standard volume form.

The most convenient metric on compactly supported functions is the one induced by the standard $L^2$ norm (with respect to Lebesgue measure), denoted $\|\cdot\|_{L^2}$. We abuse notation slightly and denote the induced metric on the space of WECTs also using norm notation as follows:
\begin{equation}\label{eqn:wect_distance}
\| \wect_{K_1,g_1} - \wect_{K_2,g_2} \|_{L^2}.
\end{equation}
This notation is in fact warranted, since this metric is equivalent to the one induced by the $L^2$ norm on $S^{d-1} \times I$, where $I$ is a compact interval, with respect to the product of the standard measure on $S^{d-1}$ with Lebesgue measure on $I$. With this metric, the space of WECTs has a Euclidean structure, meaning that WECTs are amenable to methods from functional data analysis and machine learning.

Computationally, a WECT is represented by a finite number of samples. Taking $m$ samples from $\R$ and $n$ samples from $S^{d-1}$, the values of the WECT can be arranged in a matrix of size $m \times n$. Then, the $L^2$ distance in Equation \eqref{eqn:wect_distance} can be computed simply as a Frobenius norm, making the process of comparing WECTs numerically efficient.

\subsection{Injectivity of the WECT}

Inverse problems in topological data analysis have recently become an active topic of research \cite{oudot2019inverse}. The basic general question is: Is it possible for inequivalent spaces to be mapped to the same topological summary statistic? This question has recently been tackled for various flavors of topological signatures \cite{frosini2011uniqueness,oudot2017barcode,curry2018fiber,catanzaro2019moduli} including Persistent Homology and Euler Curve Transforms \cite{turner2014persistent,ghrist2018persistent,fasy2018challenges,curry2018many,betthauser2018topological,crawford2019predicting,fasy2019persistence}.  

The original paper on the ECT \cite{turner2014persistent} demonstrated a uniqueness result for ECT representations of compact embedded simplicial complexes with an algorithmic proof. This perspective has been pushed further to provide a sufficient number of direction samples to guarantee injectivity \cite{curry2018many}. It is shown in \cite{betthauser2018topological} that for weighted \emph{cubical complexes} defined on a regular axis-aligned lattice in $\R^d$, only $2^d$ generic samples are sufficient and an explicit reconstruction algorithm is provided. Our Algorithm \ref{alg:greyscale} produces a simplicial complexes which is essentially equivalent to the cubical complexes of \cite{betthauser2018topological}, so the reconstruction results their can be ported over directly to weighted simplicial complexes constructed via Algorithm \ref{alg:logmap}. 

In anticipation of the possibility of studying non-axis-aligned weighted simplicial complexes through the WECT signature, one might hope for a more general injectivity result. An alternative approach to the injectivity question for ECTs is given in \cite{ghrist2018persistent,curry2018many}. In these articles, the theory of Euler integral calculus is employed to prove injectivity. This approach is more theoretical and comes with the cost of a less explicit inversion algorithm. This is balanced by more general applicability. In particular, one has the following, quite general, result.

\begin{thm}[Theorem 1, \cite{ghrist2018persistent}]\label{thm:ghrist}
The map 
$$
\mathcal{R}:\mathrm{CF}_c(\R^d) \rightarrow \mathrm{CF}(S^{d-1} \times \R)
$$
defined by
\begin{equation}\label{eqn:radon_transform}
\left(\mathcal{R}(g)\right)(v,r) = \int_{\R^d} g(x) \cdot \textbf{1}_{x \cdot v \leq r} \; d\chi(x)
\end{equation}
is injective.
\end{thm}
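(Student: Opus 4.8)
The statement is quoted from \cite{ghrist2018persistent}, so the following is a sketch of the kind of argument one expects. The plan is to reduce the injectivity of the half-space Euler--Radon transform $\mathcal{R}$ to the injectivity of the classical \emph{hyperplane} Euler--Radon transform, and then to invert the latter via Schapira's inversion formula for Radon transforms of constructible functions.

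\emph{Step 1: reduce to the hyperplane transform.} Suppose $g \in \mathrm{CF}_c(\R^d)$ with $\mathcal{R}(g) \equiv 0$; we must show $g = 0$. Fix $v \in S^{d-1}$ and write $h_v(r) = \mathcal{R}(g)(v,r)$. Since $g$ is compactly supported and constructible, $h_v$ is a constructible function of $r$ that vanishes for $r \ll 0$ and stabilizes to $\int_{\R^d} g\, d\chi$ for $r \gg 0$; moreover, because $\mathbf{1}_{x\cdot v \le r} = \mathbf{1}_{x\cdot v < r} + \mathbf{1}_{x\cdot v = r}$ and Euler integration is finitely additive, the jump of $h_v$ at $r$ equals the hyperplane Euler integral $\check{g}(v,r) := \int_{\{x\cdot v = r\}} g(x)\, d\chi(x)$. (Equivalently, since $S^{d-1}$ contains both $v$ and $-v$, one has $h_v(r) + h_{-v}(-r) = \int_{\R^d} g\, d\chi + \check{g}(v,r)$.) Letting $r \to +\infty$ in $h_v \equiv 0$ gives $\int_{\R^d} g\, d\chi = 0$, and then reading off jumps gives $\check{g}(v,r) = 0$ for all $(v,r) \in S^{d-1}\times\R$. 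Hence it suffices to prove that the hyperplane Euler--Radon transform $\check{\mathcal{R}}: g \mapsto \check{g}$ is injective.

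\emph{Step 2: Schapira's composition formula.} Introduce the dual transform $\check{\mathcal{R}}^{\ast}$ which integrates a constructible function on $S^{d-1}\times\R$ over the set of hyperplanes through a given point $x$, i.e.\ over $\{(v,r) : x\cdot v = r\}$. The Euler-calculus Fubini theorem identifies the kernel of the composite $\check{\mathcal{R}}^{\ast}\circ\check{\mathcal{R}}$ with the constructible function on $\R^d\times\R^d$ given by $(x,x') \mapsto \chi\big(\{(v,r) : x\cdot v = r = x'\cdot v\}\big)$. When $x = x'$ this incidence set is the graph $\{(v, x\cdot v) : v\in S^{d-1}\} \cong S^{d-1}$, with Euler characteristic $\chi(S^{d-1})$; when $x \ne x'$ it is the great subsphere $\{v : (x-x')\cdot v = 0\} \cong S^{d-2}$ (with $r$ then determined), with Euler characteristic $\chi(S^{d-2})$. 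Therefore
\begin{equation*}
\check{\mathcal{R}}^{\ast}\circ\check{\mathcal{R}} \;=\; \big(\chi(S^{d-1}) - \chi(S^{d-2})\big)\,\mathrm{id} \;+\; \chi(S^{d-2})\,\Big(f \mapsto \Big(\textstyle\int_{\R^d} f\, d\chi\Big)\mathbf{1}_{\R^d}\Big).
\end{equation*}
Since $\chi(S^{d-1}) - \chi(S^{d-2}) = (1+(-1)^{d-1}) - (1+(-1)^{d-2}) = 2(-1)^{d-1} \ne 0$, the right-hand side is an invertible operator on $\mathrm{CF}_c(\R^d)$ (a nonzero scalar plus a rank-one-type perturbation), so $\check{g} = 0$ forces $g = 0$, completing the argument.

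\emph{Main obstacle.} The crux is making the composition formula in Step 2 rigorous: Euler ``measure'' is only finitely additive, so the interchange of Euler integrals is not Fubini in the usual sense and must be justified through Schapira's machinery (constructible sheaves, or a direct stratification argument). One must also check that the dual transform $\check{\mathcal{R}}^{\ast}$ is well defined on the functions that actually arise, which requires controlling the behavior of $\check{g}$ on the non-compact space $S^{d-1}\times\R$ (this is where compact support of $g$ is essential), and that the off-diagonal incidence set is genuinely a great sphere $S^{d-2}$ \emph{uniformly} in $x \ne x'$, so that the kernel has exactly the two-term form displayed above. Everything else---the reduction in Step 1 and the linear algebra inverting the composite---is routine once these points are in place.
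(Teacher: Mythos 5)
The paper does not actually prove this theorem: it is quoted from Ghrist--Levanger--Mai, and the paper only remarks that the proof goes by ``appealing to a general result of Schapira on inverting topological Radon transforms.'' Your proposal rests on the same ultimate tool---Schapira's composition/inversion formula for Euler--Radon transforms---but routes through it differently. Ghrist--Levanger--Mai apply Schapira's formula \emph{directly} to the closed half-space relation $S=\{(x,(v,r)): x\cdot v\le r\}$, pairing it with the complementary relation $S'=\{((v,r),x): x\cdot v\ge r\}$; the incidence set $S_x\cap S'^{x'}$ is then $\cong S^{d-1}$ for $x=x'$ (Euler characteristic $1+(-1)^{d-1}$) and contractible for $x\ne x'$ (Euler characteristic $1$), giving the invertibility constant $\lambda-\mu=(-1)^{d-1}\ne 0$ with no intermediate reduction. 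You instead first pass to the hyperplane transform and then run Schapira's formula on the hyperplane incidence, obtaining $2(-1)^{d-1}$; this is essentially Schapira's original ``tomography of constructible functions'' computation and is equally valid. The direct route is slightly cleaner (one application of the composition formula, no reduction step), while yours has the virtue of isolating the classical hyperplane inversion as the engine and showing explicitly that the half-space data determines the hyperplane data.

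One caveat on your Step 1: the claim that ``the jump of $h_v$ at $r$ equals $\check{g}(v,r)$'' is not rigorous as stated, because Euler integration does not commute with one-sided limits---for example with $g=\mathbf{1}_{(0,1)}$ on $\R$ one has $\lim_{s\to 1^-}\int g\,\mathbf{1}_{x\le s}\,d\chi=0$ while $\int g\,\mathbf{1}_{x<1}\,d\chi=-1$, so the pointwise jump of $h_v$ need not compute the hyperplane integral. Fortunately your parenthetical identity $h_v(r)+h_{-v}(-r)=\int_{\R^d}g\,d\chi+\check{g}(v,r)$, which uses only finite additivity of the Euler integral and the presence of antipodal directions in the domain, is correct and suffices; that (together with the $r\to+\infty$ stabilization, which is where compact support of $g$ enters) should be taken as the actual argument for Step 1, with the ``jump'' language discarded.
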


We use $\mathrm{CF}(\R^d)$ to denote the space of \emph{constructible functions}; these are functions $\R^d \rightarrow \Z$ whose level sets satisfy a certain tameness condition, defined nowadays in the technical language of $o$-minimal set theory \cite{baryshnikov2011inversion,curry2012euler,ghrist2018persistent}. The set $\mathrm{CF}(S^{d-1} \times \R)$ is defined similarly. We are restricting to \emph{compactly supported} constructible functions $\mathrm{CF}_c(\R^d)$. This space in particular contains admissible functions defined on embedded simplicial complexes in $\R^d$. The right side of Equation \eqref{eqn:radon_transform} is defined in terms of Euler integration. Roughly, one treats the Euler characteristic formally as a measure, allowing for integration of sufficiently well-behaved functions. The transform $\mathcal{R}$ can be understood as a topological version of the classical Radon transform used in tomography applications \cite{helgason1980radon}. Theorem \ref{thm:ghrist} is proved by appealing to a general result of Schapira on inverting topological Radon transforms of this type \cite{schapira1995tomography}. 
The authors of \cite{ghrist2018persistent} observe that if $g$ is the indicator function for an embedded simplicial complex $K$, then $\mathcal{R}(g)$ is exactly the ECT for $K$, whence the ECT is injective \cite[Corollary 1]{ghrist2018persistent}. On the other hand, if we consider functions $g$ which are admissible weight functions on embedded simplicial complexes, we obtain the following result as an immediate corollary.

\begin{thm}\label{thm:injectivity_theorem}
The Weighted Euler Characteristic Transform is injective on the space of weighted simplicial complexes. That is, if $(K_1,g_1)$ and $(K_2,g_2)$ are weighted simplicial complexes in $\R^d$ with $\wect_{K_1,g_1} = \wect_{K_2,g_2}$, then $(K_1,g_1)=(K_2,g_2)$. 
\end{thm}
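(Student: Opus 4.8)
The plan is to realize the WECT as the image, under the topological Radon transform $\mathcal{R}$ of Theorem~\ref{thm:ghrist}, of a suitable compactly supported constructible function, so that injectivity of the WECT is inherited from injectivity of $\mathcal{R}$.

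\textbf{Step 1: the associated constructible function.} To a weighted simplicial complex $(K,g)$ with $K\subset\R^d$ and $g$ admissible, I would associate the function $\phi_{K,g}\colon\R^d\to\Z$ defined by $\phi_{K,g}=\sum_{\sigma\in K}g(\sigma)\,\mathbf{1}_{\mathring\sigma}$, where $\mathring\sigma$ denotes the relative interior of the embedded simplex $\sigma$; equivalently, $\phi_{K,g}(x)=g(\sigma)$ when $x$ lies in the relative interior of $\sigma$, and $\phi_{K,g}(x)=0$ when $x\notin|K|$. Since $K$ is finite and each $\mathring\sigma$ is semialgebraic, $\phi_{K,g}\in\mathrm{CF}_c(\R^d)$, so it lies in the domain of $\mathcal{R}$. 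Moreover, $|K|$ is recovered as $\{x:\phi_{K,g}(x)\neq 0\}$ and each weight $g(\sigma)$ as the value of $\phi_{K,g}$ on $\mathring\sigma$ (using $g\geq 1$), so the assignment $(K,g)\mapsto\phi_{K,g}$ is injective.

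\textbf{Step 2: identifying $\mathcal{R}(\phi_{K,g})$ with the WECT.} Writing $H_{v,r}=\{x\in\R^d:x\cdot v\leq r\}$ and using linearity of the Euler integral together with $\int\mathbf{1}_A\,d\chi=\chi_c(A)$ for definable $A$, one obtains
\begin{equation*}
\mathcal{R}(\phi_{K,g})(v,r)=\sum_{\sigma\in K}g(\sigma)\,\chi_c(\mathring\sigma\cap H_{v,r}).
\end{equation*}
The crux is the identity $\chi_c(\mathring\sigma\cap H_{v,r})=(-1)^{\rmdim(\sigma)}\,\mathbf{1}_{p_v(\sigma)\leq r}$. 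Since a linear functional on a polytope attains its extrema at vertices, $p_v(\sigma)\leq r$ is equivalent to $\sigma\subseteq H_{v,r}$, in which case $\mathring\sigma\cap H_{v,r}=\mathring\sigma$ and $\chi_c(\mathring\sigma)=(-1)^{\rmdim(\sigma)}$. Otherwise, either every vertex of $\sigma$ lies in $\{x\cdot v>r\}$, so $\mathring\sigma\cap H_{v,r}=\emptyset$; or $\sigma$ straddles the hyperplane $\{x\cdot v=r\}$, in which case $\mathring\sigma\cap H_{v,r}$ is empty if that hyperplane misses $\mathring\sigma$, and otherwise decomposes as a nonempty open $\rmdim(\sigma)$-cell $\mathring\sigma\cap\{x\cdot v<r\}$ together with a disjoint nonempty open $(\rmdim(\sigma)-1)$-cell $\mathring\sigma\cap\{x\cdot v=r\}$, giving $\chi_c=(-1)^{\rmdim(\sigma)}+(-1)^{\rmdim(\sigma)-1}=0$ by additivity. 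Summing over $\sigma$, only simplices with $p_v(\sigma)\leq r$ survive, so $\mathcal{R}(\phi_{K,g})(v,r)=\sum_d(-1)^d\sum_{\sigma\in K^d,\,p_v(\sigma)\leq r}g(\sigma)=\chi^w(p_v^{-1}((-\infty,r]),g)=\wect_{K,g}(v,r)$.

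\textbf{Step 3: conclusion.} If $\wect_{K_1,g_1}=\wect_{K_2,g_2}$, then $\mathcal{R}(\phi_{K_1,g_1})=\mathcal{R}(\phi_{K_2,g_2})$ by Step 2, hence $\phi_{K_1,g_1}=\phi_{K_2,g_2}$ by Theorem~\ref{thm:ghrist}, hence $(K_1,g_1)=(K_2,g_2)$ by Step 1. I expect the main obstacle to be Step 2, and within it the straddling case: one must check that the cutting hyperplane $\{x\cdot v=r\}$ meets the \emph{relative interior} $\mathring\sigma$ (rather than only the boundary $\partial\sigma$) precisely when $\min_{x\in\sigma}x\cdot v<r<\max_{x\in\sigma}x\cdot v$, and treat the boundary case $r=\min_{x\in\sigma}x\cdot v$ separately; the remaining bookkeeping is routine. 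A secondary point worth stating explicitly is that Step 1 recovers the geometric realization together with its induced weight function, which is exactly the data a weighted simplicial complex supplies to the WECT, so this is the sense in which the equality $(K_1,g_1)=(K_2,g_2)$ is to be read.
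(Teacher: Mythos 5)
Your proposal is correct and takes essentially the same route as the paper, which obtains the theorem as an immediate corollary of Theorem~\ref{thm:ghrist} by viewing the admissible weight function as a compactly supported constructible function on $\R^d$ (your $\phi_{K,g}$) and identifying its topological Radon transform with $\wect_{K,g}$. Your Step 2 computation of $\chi_c(\mathring\sigma\cap H_{v,r})$ and your closing remark that equality must be read at the level of geometric realizations and induced weight functions are both correct, and in fact supply details that the paper leaves implicit.
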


\subsection{Comparison to Other Methods}\label{sec:comparison_to_other_methods}

The WECT provides a topological signature which simultaneously incorporates shape data and non-geometric weight data. In the case of image data, by discretely sampling the domain $S^{d-1} \times \R$ one obtains a discrete signature with a similar memory footprint to the original image. However, we show experimentally that the WECT provides a representation, which is more effective at distinguishing shape features. In this subsection, we compare the WECT representation to other shape descriptors appearing in the topological data analysis literature.

\noindent\textbf{Persistent Homology.} The WECT representation has several benefits over the commonly used persistence diagram signature. Foremost, it is a nontrivial task to simultaneously incorporate geometric and non-geometric features into a persistence diagram. One approach is to use a multiparameter filtration of the dataset \cite{frosini1999size,carlsson2009theory}. The major drawback of such an approach is that multiparameter persistent homology does not in general admit a convenient analogue of the persistence diagram statistics used in classical persistent homology. An alternative approach to incorporating geometric and non-geometric features into persistent cohomology was recently proposed in \cite{cang2018persistent}, where an enriched barcode representation is  obtained through least squares optimization of persistent cohomology cycle representatives.

The simple WECT representation for weighted simplicial complex data also has the benefit of immediately providing a vectorized topological signature. This allows straightforward usage of WECT summaries as covariates in statistical models---this was the main idea of \cite{crawford2019predicting}, where the ECT summaries were used as covariates in a Gaussian process regression for prediction of survival times of subjects with Glioblastoma Multiforme brain tumors. This is in stark contrast to analysis using persistence diagrams or barcodes from persistent homology. Indeed, a persistence diagram is an unstructured point cloud in $\R^2$ and care must be taken to vectorize this signature in order to incorporate it into statistical models. There are several extant vectorization methods in the literature, including \emph{persistence landscapes} \cite{bubenik2015statistical} and \emph{persistence images} \cite{adams2017persistence}, as well as more straightforward feature aggregation \cite{bendich2016persistent}. Any vectorization of the persistence diagram space necessarily distorts its natural latent geometry, since the canonical metric on persistence diagrams, the bottleneck distance, is non-Euclidean \cite{bubenik2019embeddings}. 

\noindent\textbf{Variants of the ECT.} When studying simplicial complexes arising from grayscale image data, one could imagine other relevant simplicial complexes to which one could apply the standard ECT. Examples include thresholding pixel values in the image and building restricted two-dimensional complexes or using the pixel values to build a three-dimensional simplicial complex. We found these approaches to give unsatisfactory performance on our tumor dataset, although they may be viable approaches for other applications.

\section{Applications}\label{sec:applications}

\subsection{Classification of MNIST Digit Images}

To understand the descriptive power of the WECT representation of image data, we first explore its ability to classify images from the ubiquitous MNIST handwritten digit dataset \cite{lecun1998gradient}. We use a small subset of 1000 $28 \times 28$ grayscale images, evenly distributed over 10 digits $0,1,\ldots,9$. As a baseline, we treat each image as a vector in $\R^{28 \times 28}$ and classify them using Support Vector Machines (SVM) with a linear kernel. Next, we produce WECT representations of all digit images. In this experiment, we discretize $S^1 \times \R$ into a $25 \times 50$ grid (i.e., 25 Euler curve directions, 50 points along each curve domain). We also smooth the Euler curves to improve robustness using a Gaussian kernel with window size $0.2 \cdot 50$ (these particular parameters were chosen in a tuning step, but we found that the results are generally insensitive to the parameter choice). We then considered each WECT representation as a vector in $\R^{25 \times 50}$ and classified using SVM with a linear kernel. We also produced smoothed ECT representations with similar parameters and ran an SVM classification. The ten-fold cross-validated classification rates from these experiments are displayed in Table \ref{tab:exp1}.

\begin{table}[!t]
\caption{SVM ten-fold classification performance of vectorized image, ECT and WECT representations for the MNIST digit data.}
\label{tab:exp1}
\begin{center}
\begin{tabular}{|c|c|}
\hline
Representation & Classification Rate \\
\hline
Image $\R^{28 \times 28}$  & 87.84 $\pm$ 1.42 \% \\
\hline
ECT $\R^{25 \times 50}$ & 89.88 $\pm$ 1.66 \% \\
\hline
WECT $\R^{25 \times 50}$ &  94.68 $\pm$ 1.57 \% \\
\hline
\end{tabular}
\end{center}
\end{table}

The classification results show that the WECT representation of the digit images is adept at encoding and distinguishing shape features, while having a similar memory footprint to the original image representation. It also outperforms the classification using smoothed ECT representations. We stress that this classification result is, of course, not meant to be competitive with those obtained by deep learning methods. Rather, this simple experiment suggests that the WECT representation produces an interesting shape summary for this type of image data, which is computationally efficient and can be trivially incorporated into various statistical models. 

To get a more detailed qualitative picture of the differences between the raw image, ECT and WECT representations of the MNIST image data, we also computed t-SNE embeddings  \cite{maaten2008visualizing} for each representation; see Figure \ref{fig:tSNE_embeddings}. While class separation is apparent in all three embeddings, it is immediately evident that the embeddings of the ECTs and WECTs are  much more distinctly clustered. On the other hand, one can easily see how classification errors arise in the ECT embedding. We believe that these errors occur because the ECT is more sensitive to topological differences between digits, while the WECT smooths these differences using weight data.

\begin{figure}[!t]
	\centering
	\includegraphics[width = 0.15\textwidth]{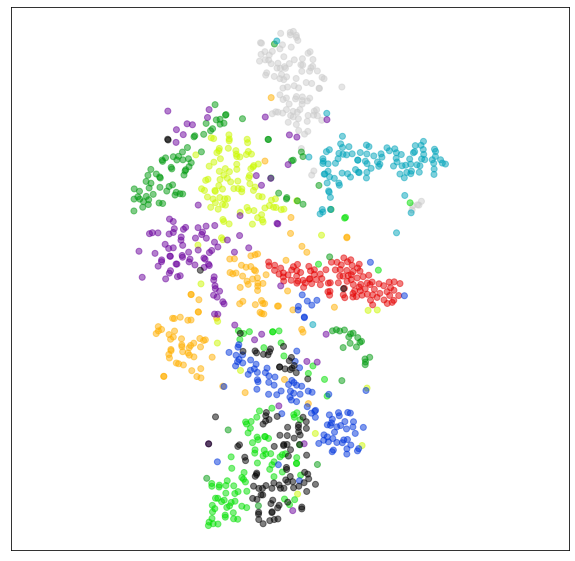}
	\includegraphics[width = 0.15\textwidth]{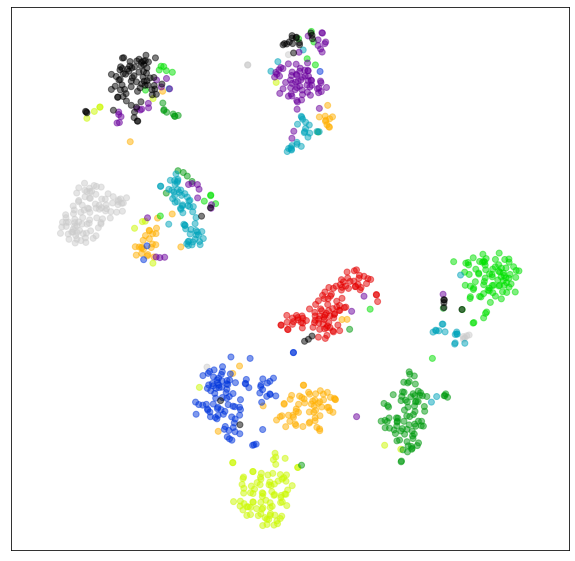}
	\includegraphics[width =
	0.15\textwidth]{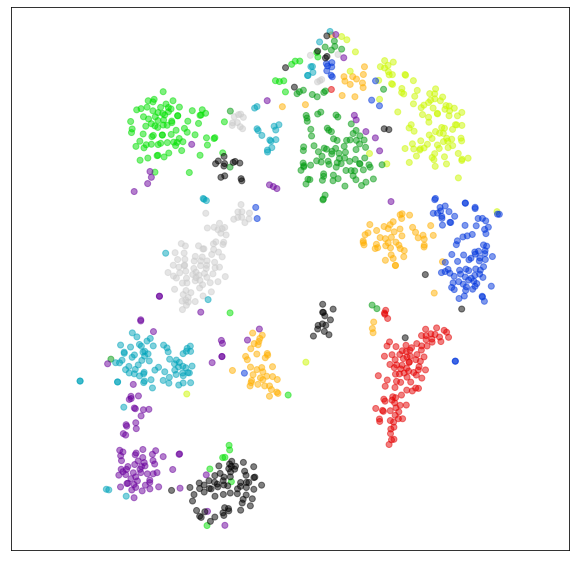}
	\caption{T-SNE embeddings of the MNIST image dataset. Left: Raw image vectors. Middle: Smoothed ECTs. Right: Smoothed WECTs.}
	\label{fig:tSNE_embeddings}
\end{figure}

\subsection{Rigid and Scale Registration}\label{sec:scale_rotation}

\begin{figure}[!t]
    \centering
    \includegraphics[width = 0.09\textwidth]{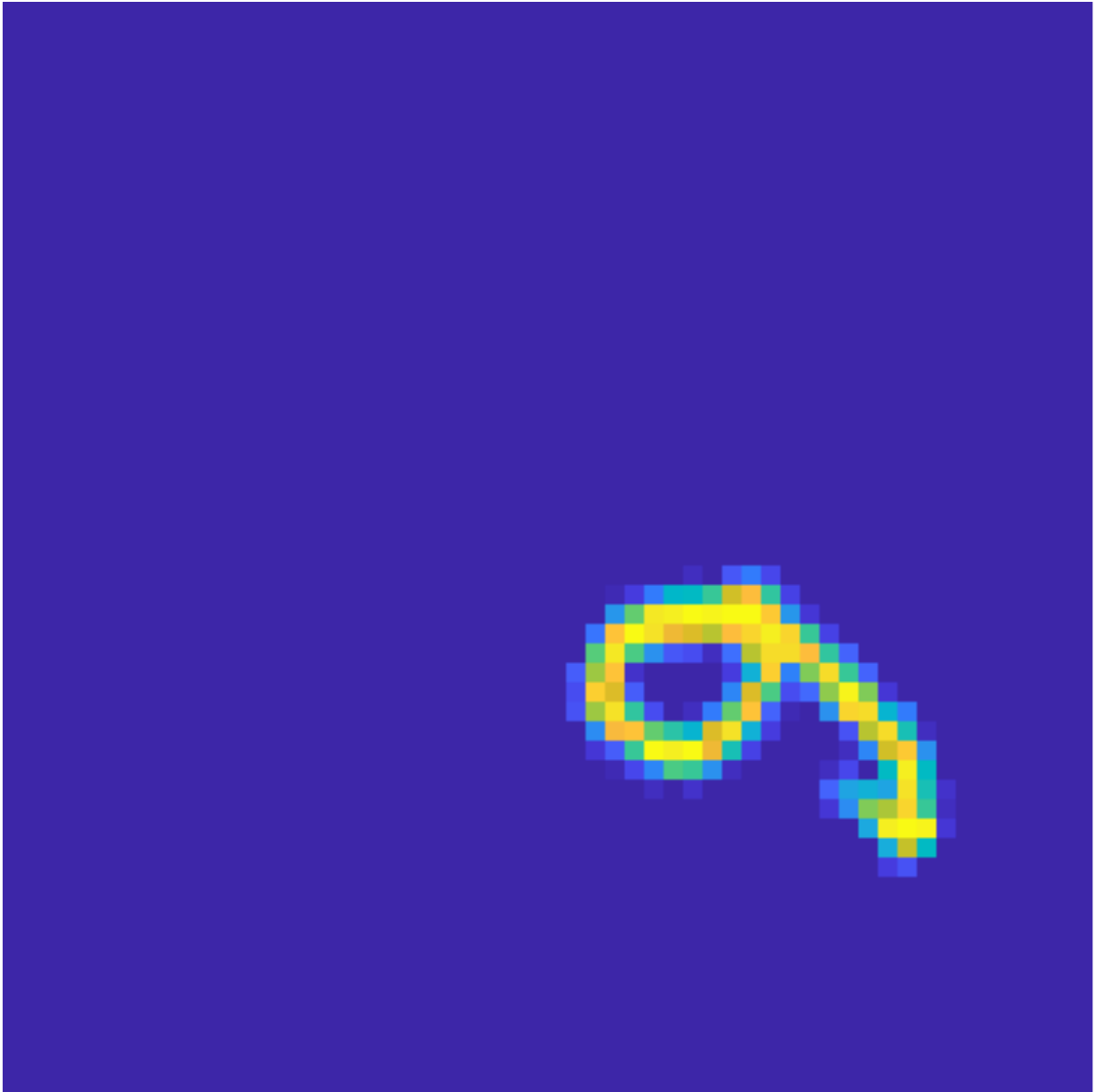}
    \includegraphics[width = 0.09\textwidth]{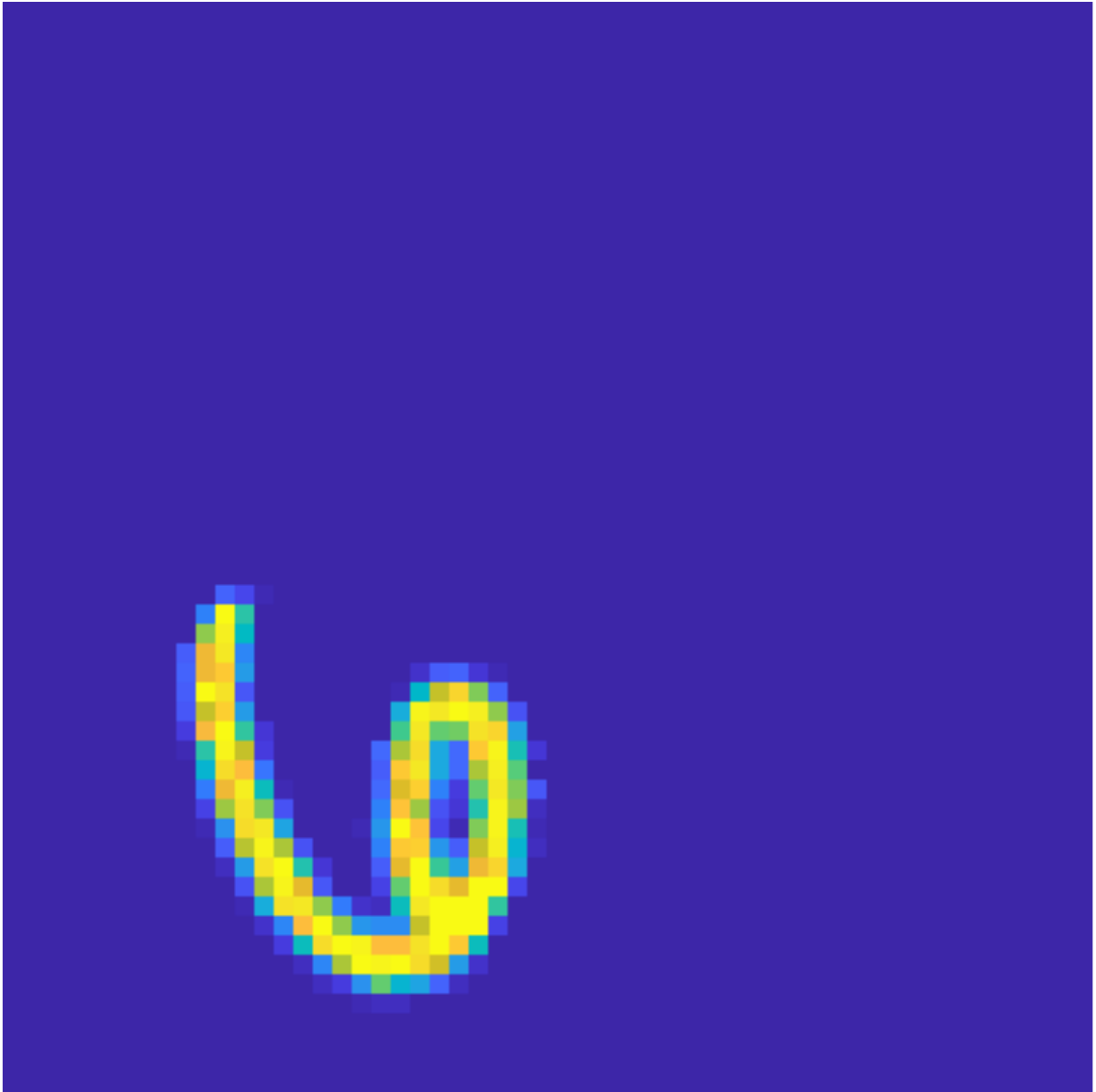}
    \includegraphics[width = 0.09\textwidth]{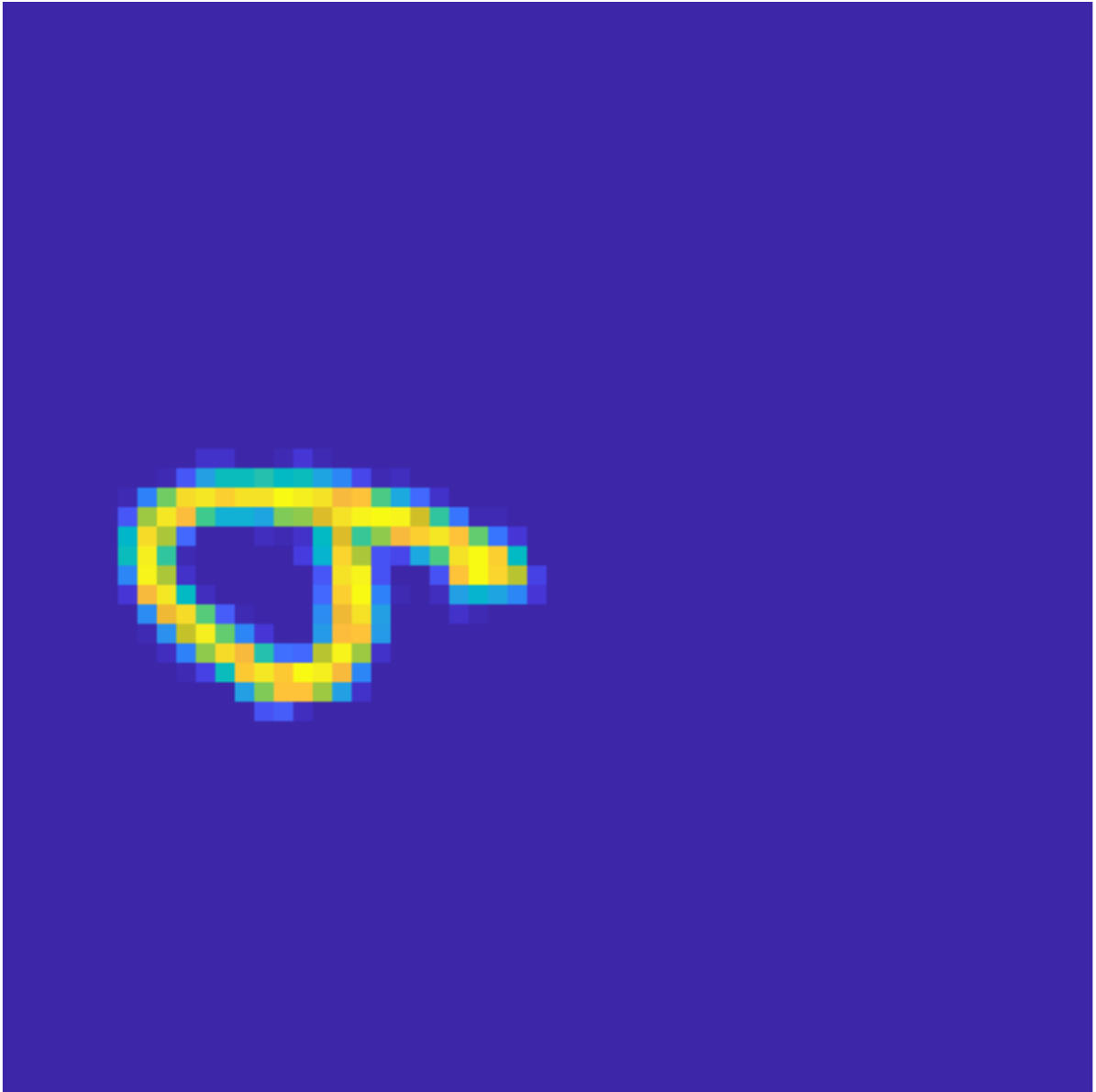}
    \includegraphics[width = 0.09\textwidth]{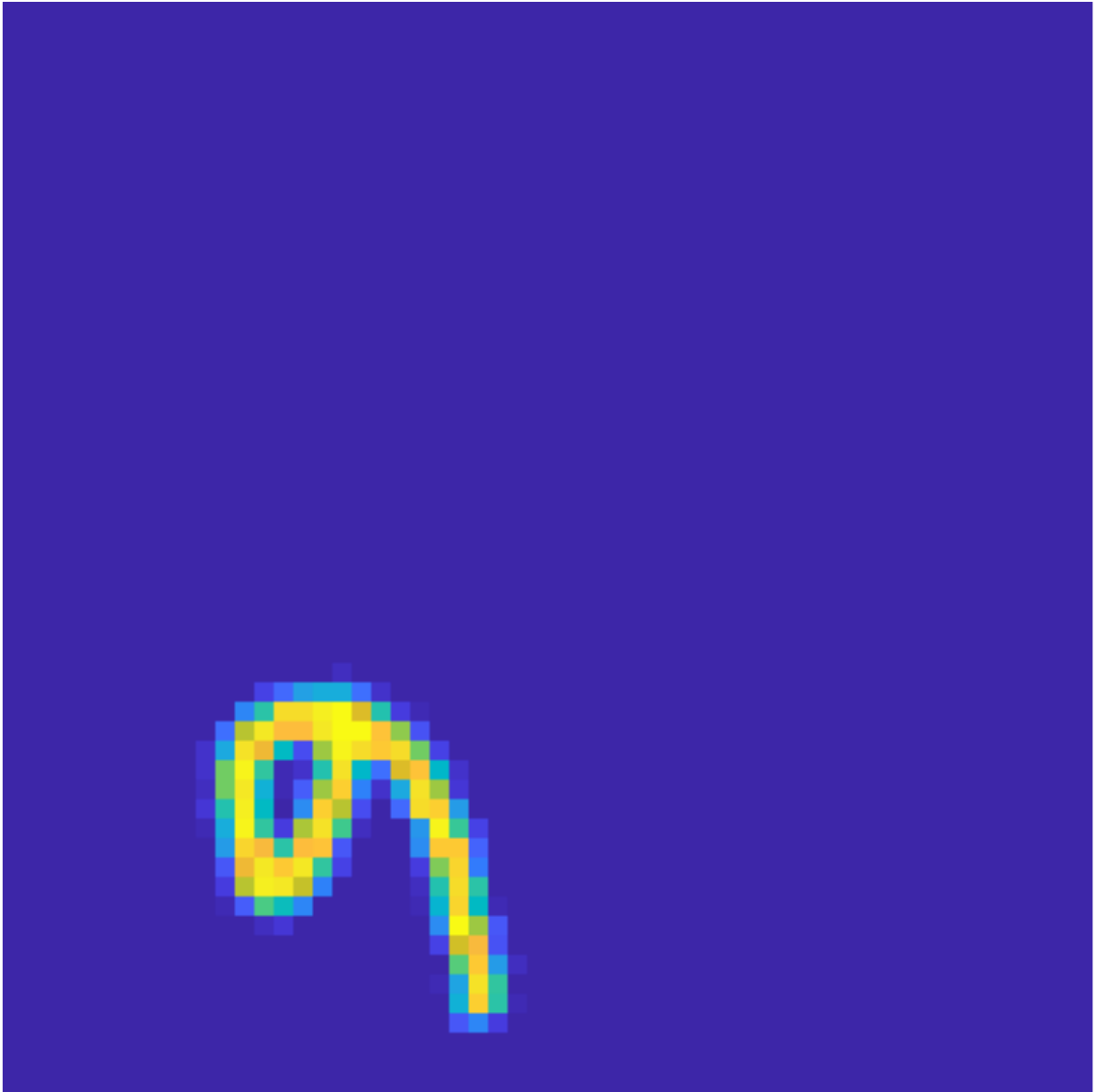}
    \includegraphics[width = 0.09\textwidth]{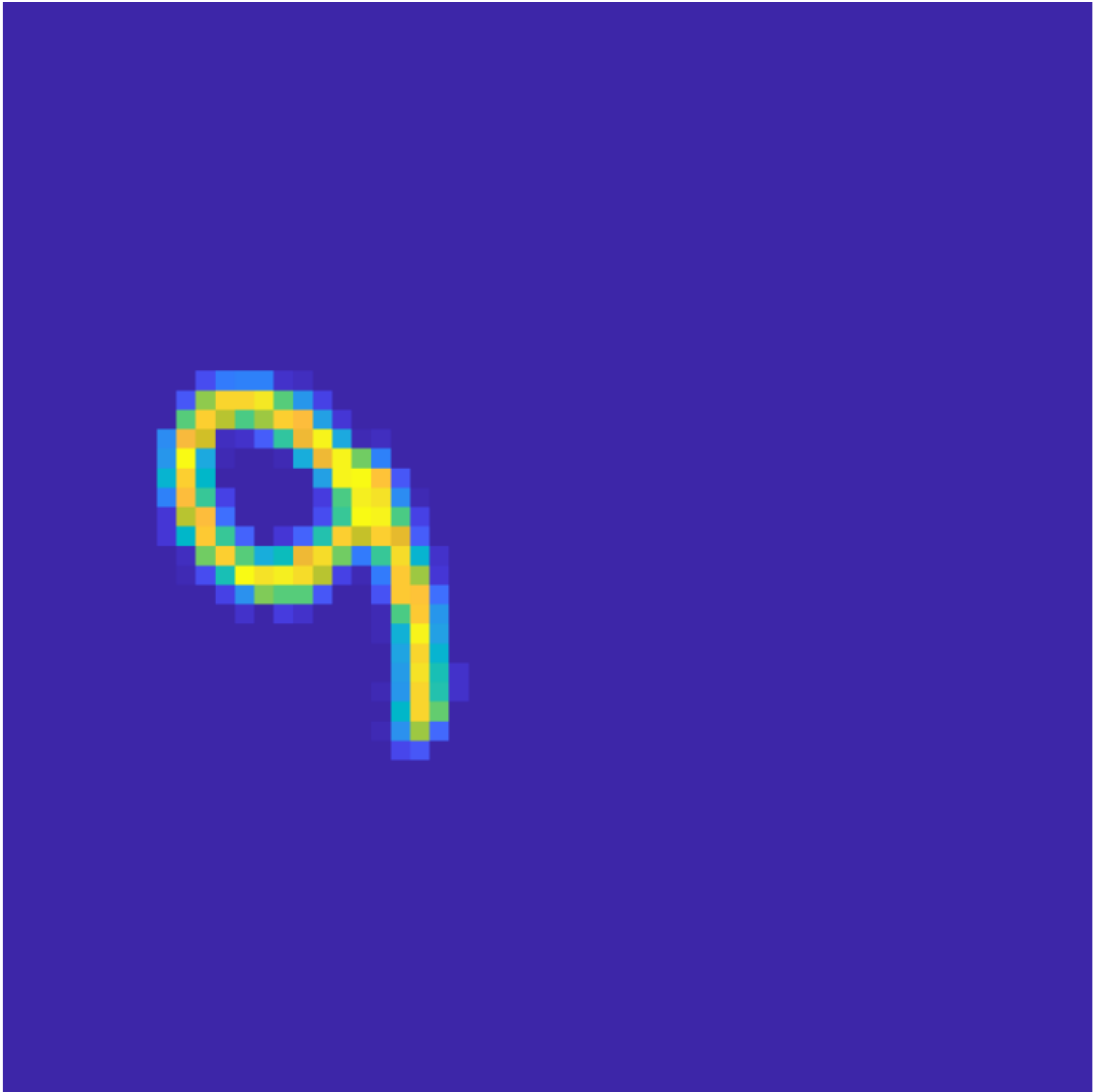}
    \includegraphics[width = 0.09\textwidth]{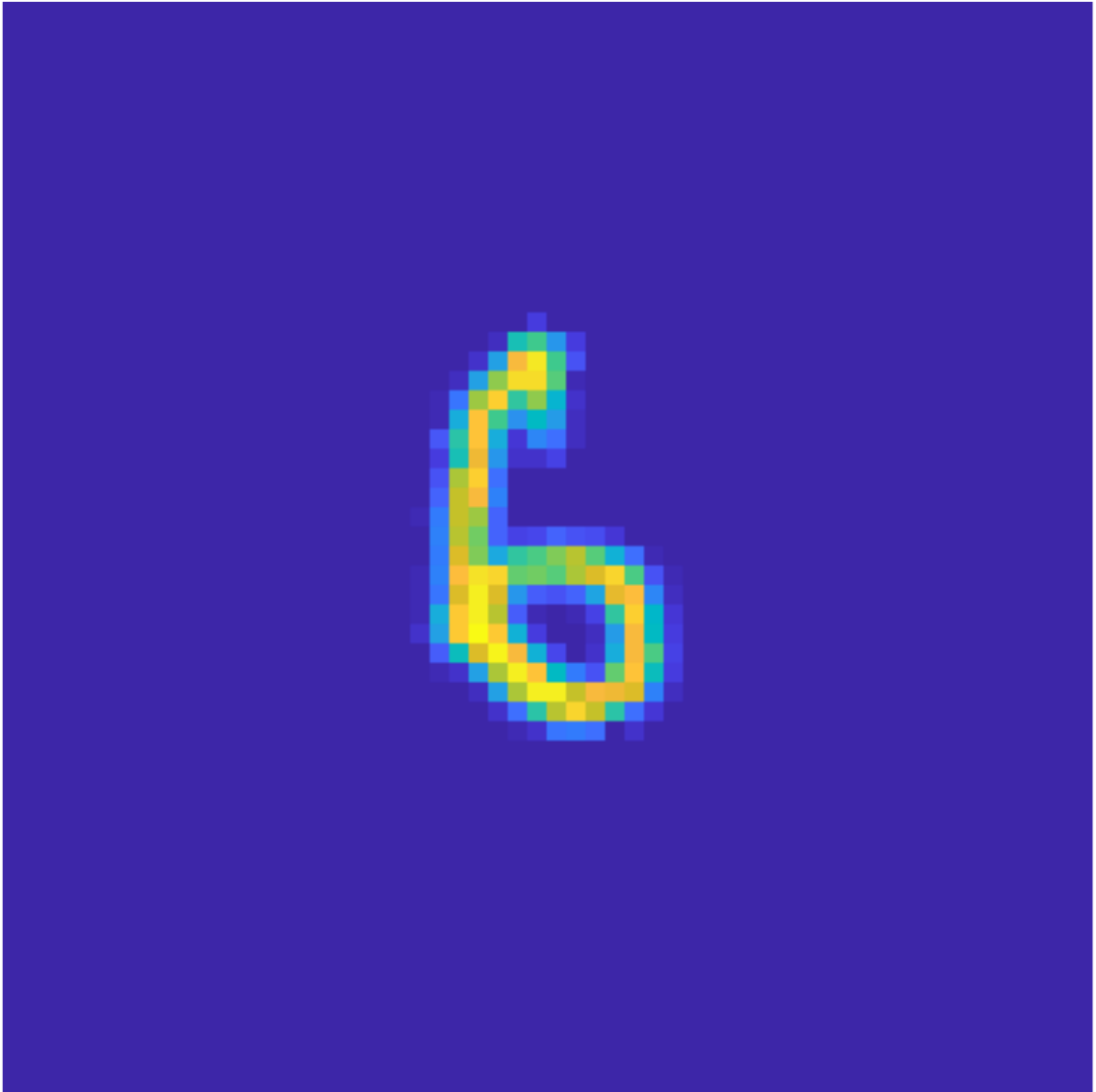}
    \includegraphics[width = 0.09\textwidth]{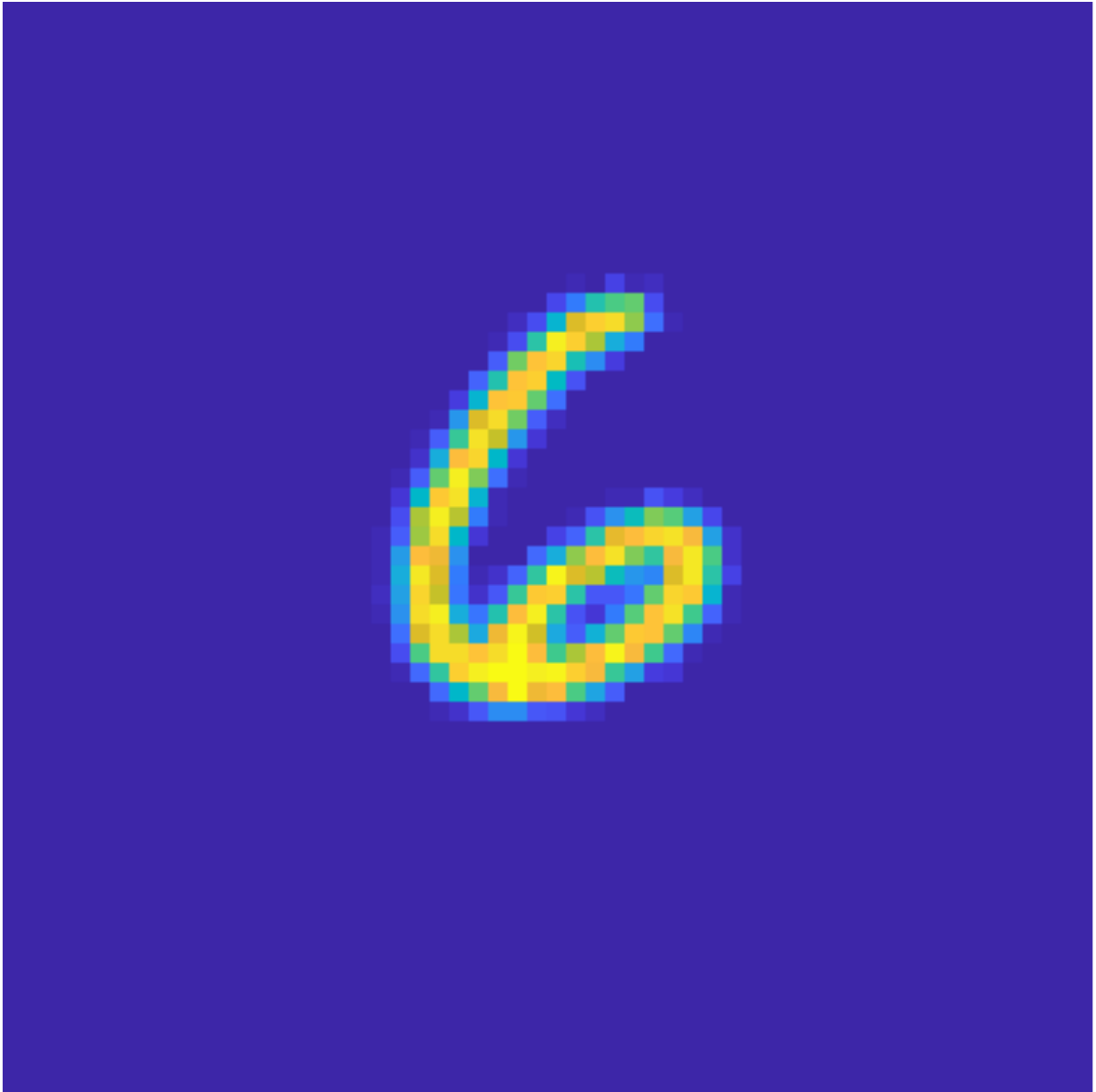}
    \includegraphics[width = 0.09\textwidth]{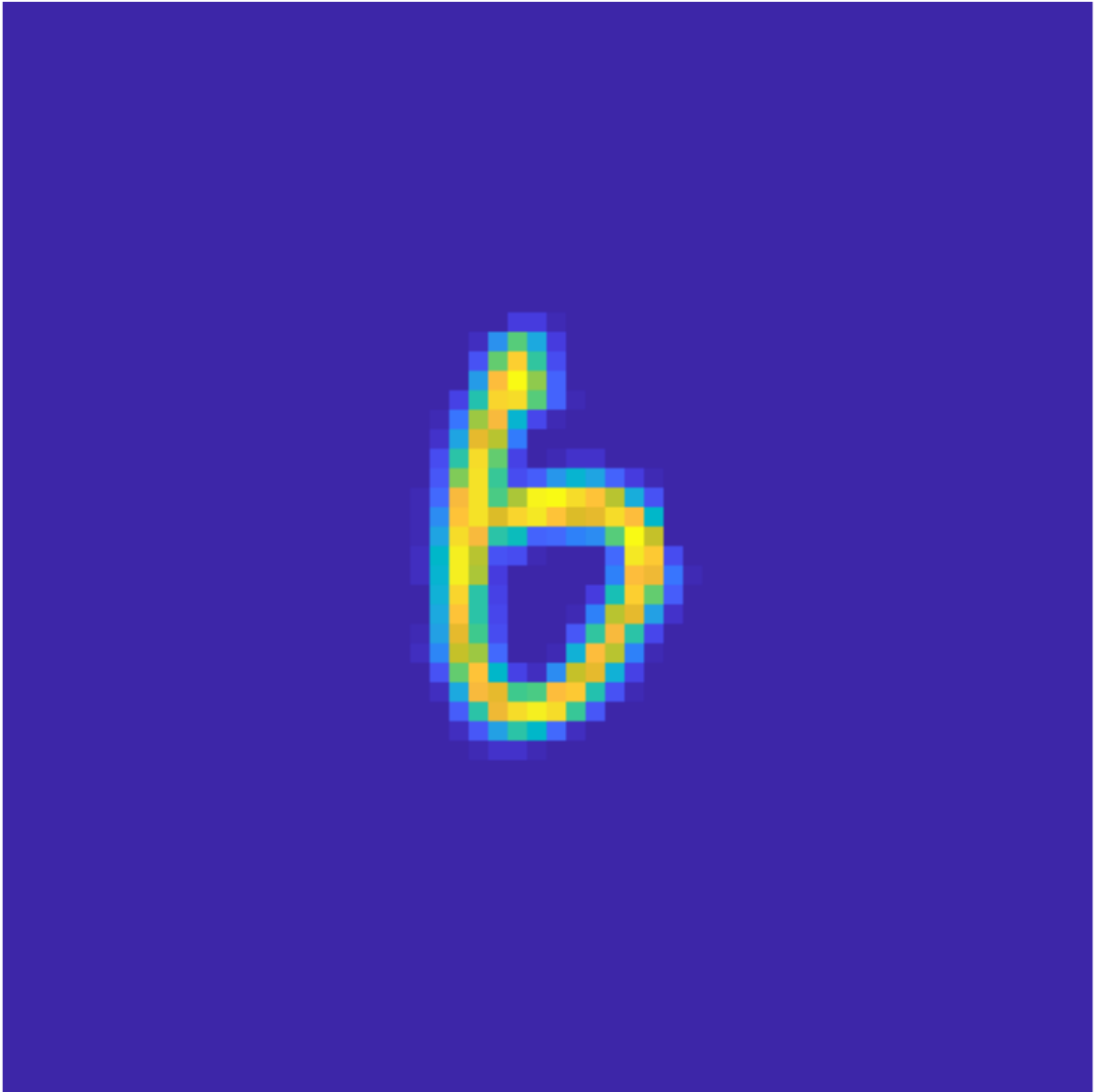}
    \includegraphics[width = 0.09\textwidth]{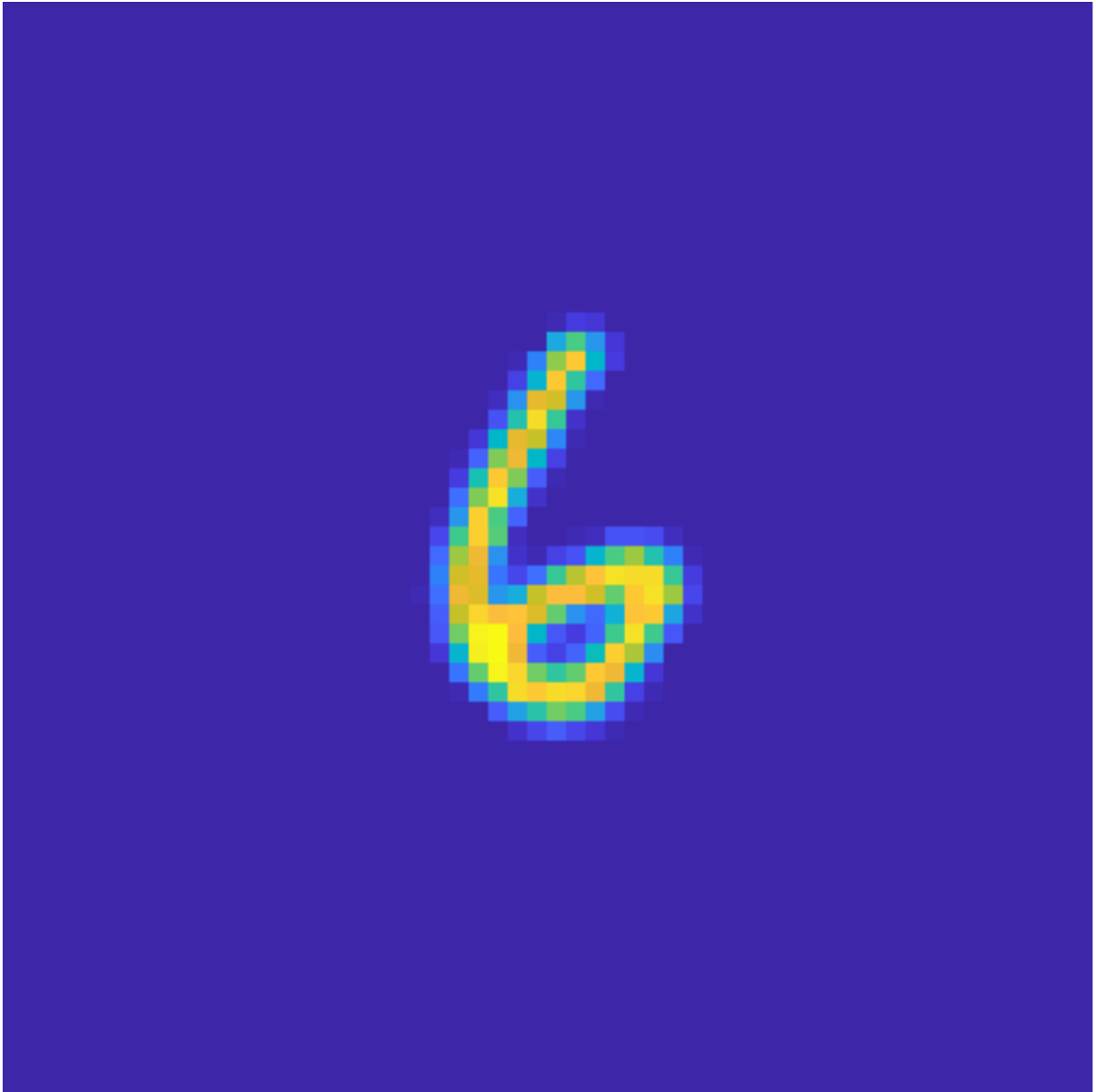}
    \includegraphics[width = 0.09\textwidth]{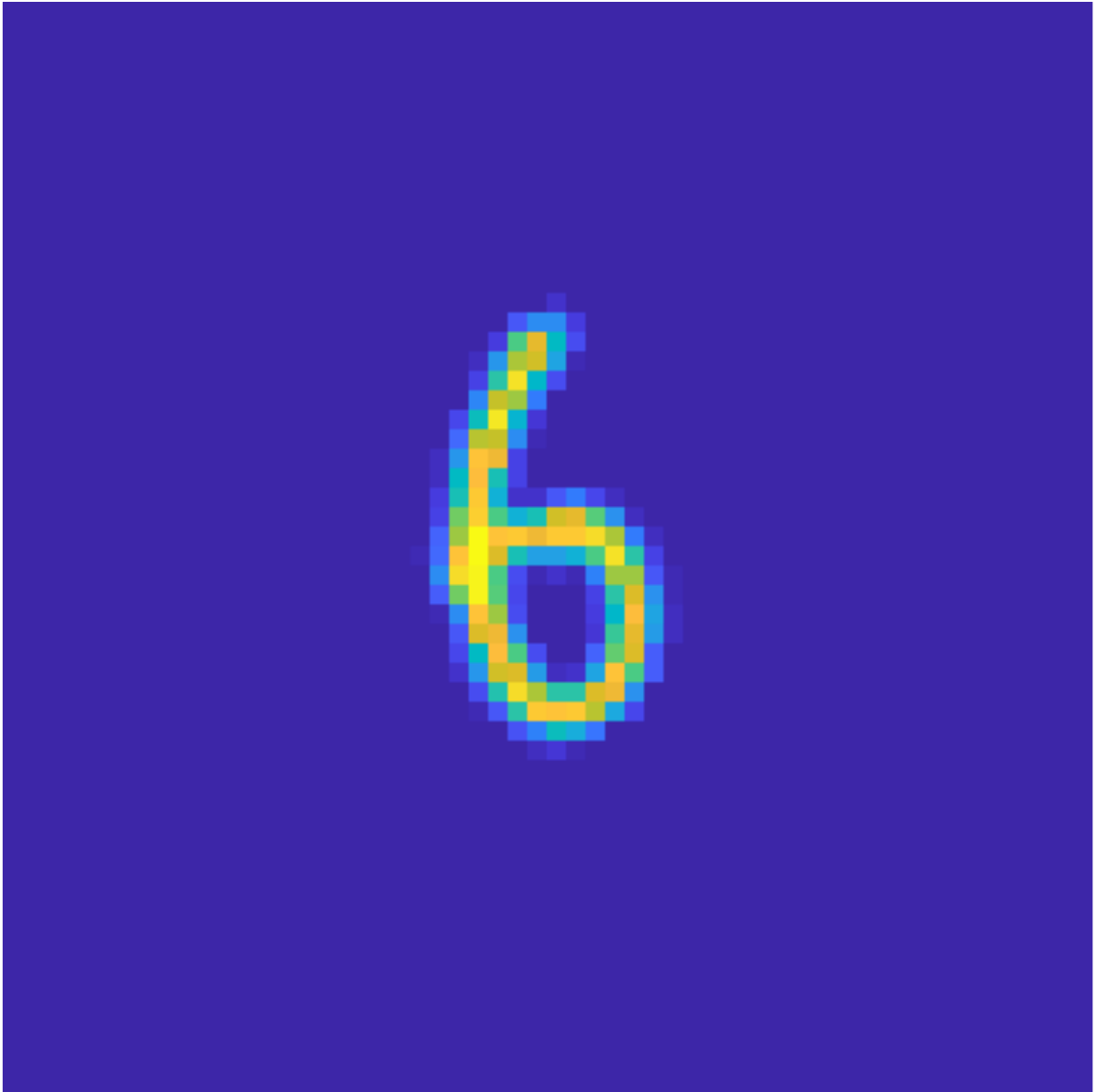}
    \caption{Top: MNIST digit images randomly rotated and translated. Bottom: The same digits after rigid registration to a template digit via the process described in Section \ref{sec:scale_rotation}.}
    \label{fig:MNIST_registration}
\end{figure}

One benefit of the simplicial complex representation of image data is that registering over scale and rigid transformations (translations and rotations) becomes trivial. Once a pair of images have been converted to weighted simplicial complexes $(K_1,g_1)$ and $(K_2,g_2)$, they can be immediately registered with respect to translation and scaling by centering each complex at the origin, and normalizing (treating vertex locations as vectors). To register over rotations, one then computes weighted Euler characteristic transforms $\wect_{K_j,g_j},\ j=1,2$ and solves the optimization problem
\begin{equation}\label{eqn:optimization}
\min_{R \in \mathrm{SO}(2)} \|\wect_{K_1,g_1} - R \cdot \wect_{K_2,g_2}\|_{L^2},
\end{equation}
where the rotation group $\mathrm{SO}(2)$ acts on a WECT by precomposition in the $S^1$-coordinate. As was noted above, the $L^2$ distance is numerically trivial to compute for finite approximations of WECTs. Thus, the optimization problem in Equation \eqref{eqn:optimization} can be solved quickly by an exhaustive search over cyclic permutations of the WECT matrix. The minimizing rotation $
R$ can then be used to register $(K_2,g_2)$ to $(K_1,g_1)$ with respect to rotations---see Figure \ref{fig:MNIST_registration}.

\subsection{Analysis of GBM Tumor Data}

Glioblastoma Multiforme (GBM) is the most common malignant brain tumor in adults \cite{holland2000}; for most patients, the prognosis is very poor: less than $10\%$ of individuals survive longer than five years and the median survival time is approximately 12 months \cite{tutt2011,mcnamara2013,mclendon2008}. GBM is a morphologically heterogeneous disease. GBM tumors exhibit complex structure in terms of their overall shape as well as internal makeup. Often, dead cells are present inside the tumor and increased blood flow near the boundary of the tumor \cite{marusyk2012}. These features result in various pixel value patterns of GBM tumor images. Thus, characterization of both the shape and texture of GBM tumors, based on medical imaging data, is important for disease prognosis as well as survival prediction. While previous studies have considered these two features separately \cite{bharath2018radiologic,Saha2016132} in the analysis, our approach is to analyze them jointly under a unified representation.

In this study, we use T1-weighted post contrast magnetic resonance images (MRIs) of GBM tumors from 63 subjects. For our analysis, we select a single axial slice with largest tumor area from each 3D image (the same approach was taken in \cite{bharath2018radiologic,Saha2016132}), and summarize the tumors' shapes and textures via the WECT. For details on the image pre-processing steps that were used prior to our analysis, see \cite{Saha2016132}.

We use a simple distance-based clustering approach to analyze the tumor data. First, each of the 63 tumor images is converted into a weighted simplicial complex using Algorithm \ref{alg:greyscale}. To isolate the shape and weight information, all simplicial complexes are centered at the origin and normalized so that the vertex farthest from the origin is at distance $1$. The weights of the simplicial complexes are then normalized to have maximum weight one; this was done to account for the varying pixel value distributions of the MRIs for each subject. Next, each weighted simplicial complex is given a smoothed WECT representation. Specifically, for each tumor image, we use 25 directions and 50 points along the domain of the Euler curve for each direction. The Euler curves were smoothed using a Gaussian kernel with a smoothing window of ten. Next, the $L^2$ distance between each pair of smoothed WECT representations was computed with registration of the tumor images over rotations (see Section \ref{sec:scale_rotation}). We applied hierarchical clustering with Ward linkage to the $63 \times 63$ distance matrix, which first suggested three natural clusters. The clusterwise mean and median survival times (in months) are reported in Table \ref{tab:exp2}.

\begin{table}[!t]
\caption{Clusterwise mean and median survival.}
\label{tab:exp2}
\medskip
\begin{center}
\begin{small}
\begin{tabular}{|c|c|c|c|}
\hline
Mean & 6.7 & 12.9 & 20.2  \\
\hline
Med. & 6.2 & 9.6 & 15.2  \\
\hline
\end{tabular}
\end{small}
\end{center}
\end{table}

\begin{figure}[!t]
    \centering
    \includegraphics[width = 0.09\textwidth]{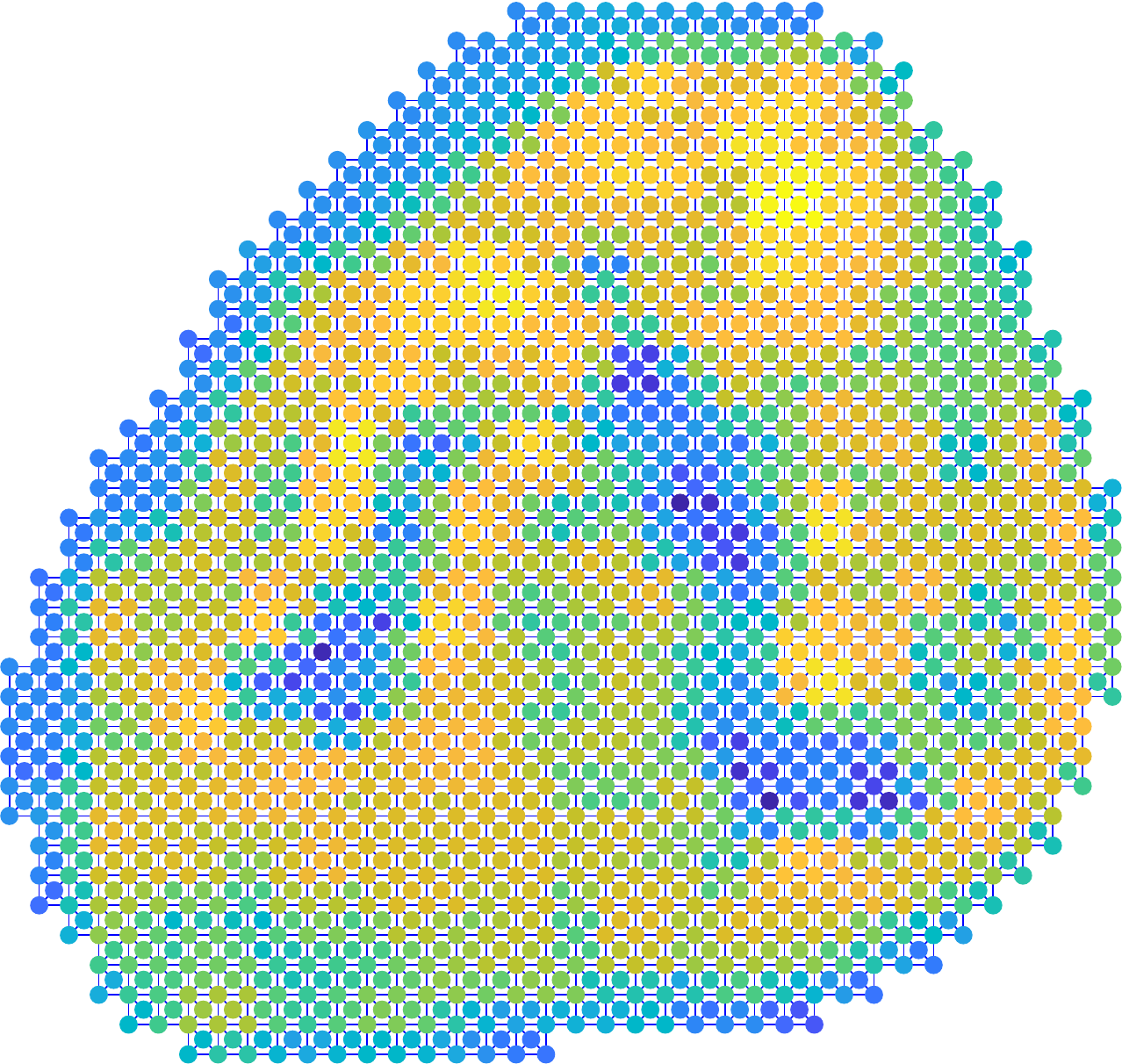}
    \includegraphics[width = 0.09\textwidth]{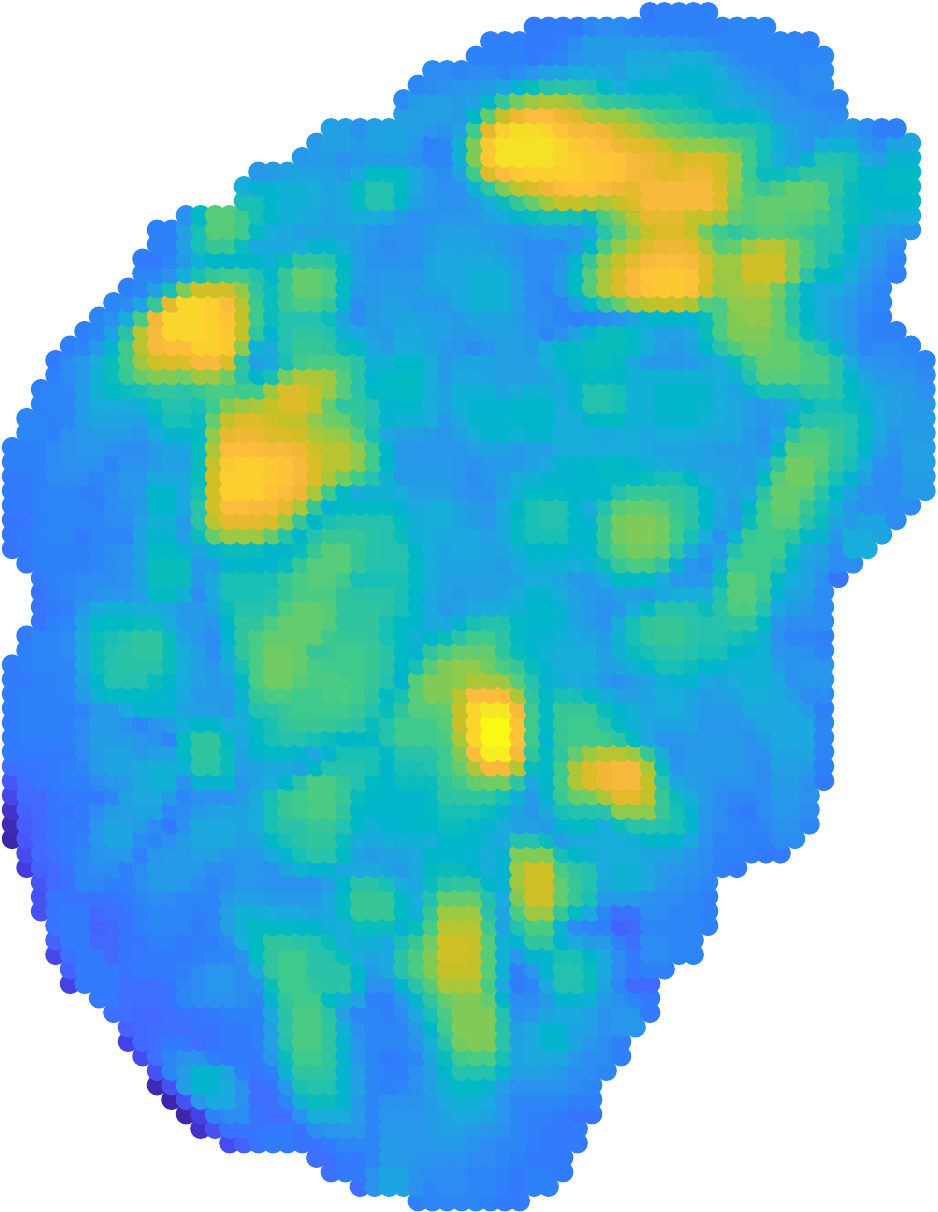}
    \includegraphics[width = 0.09\textwidth]{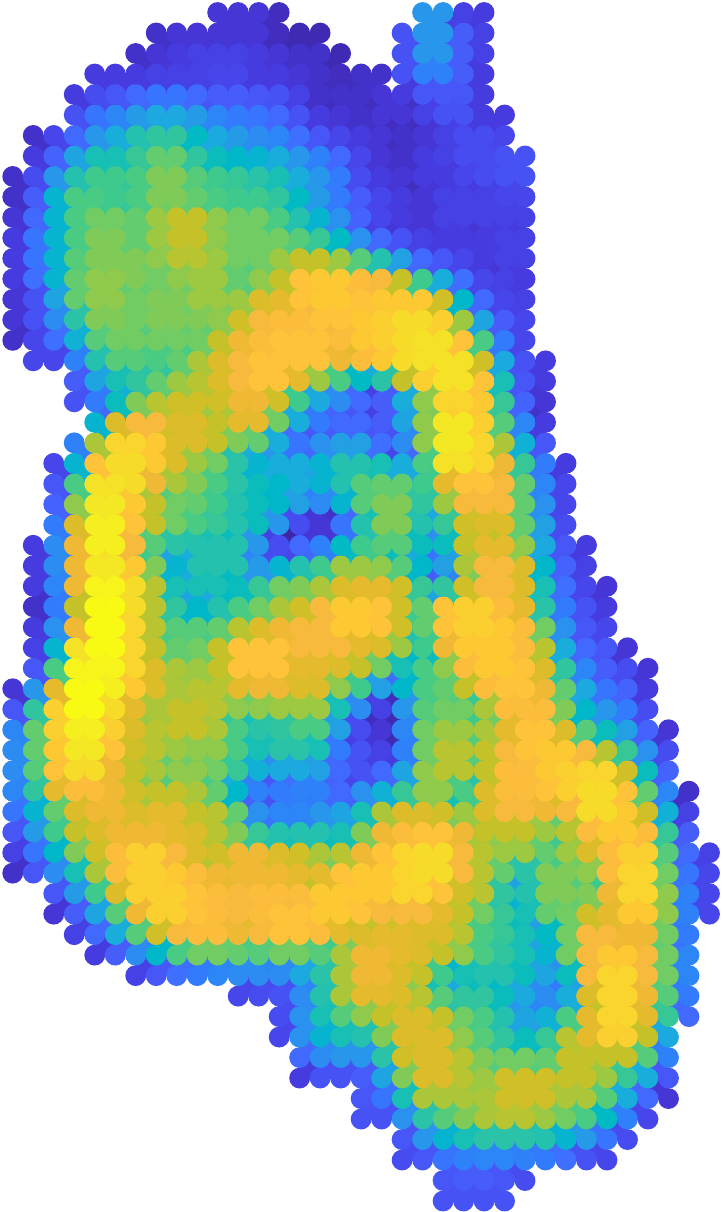}
    \includegraphics[width = 0.09\textwidth]{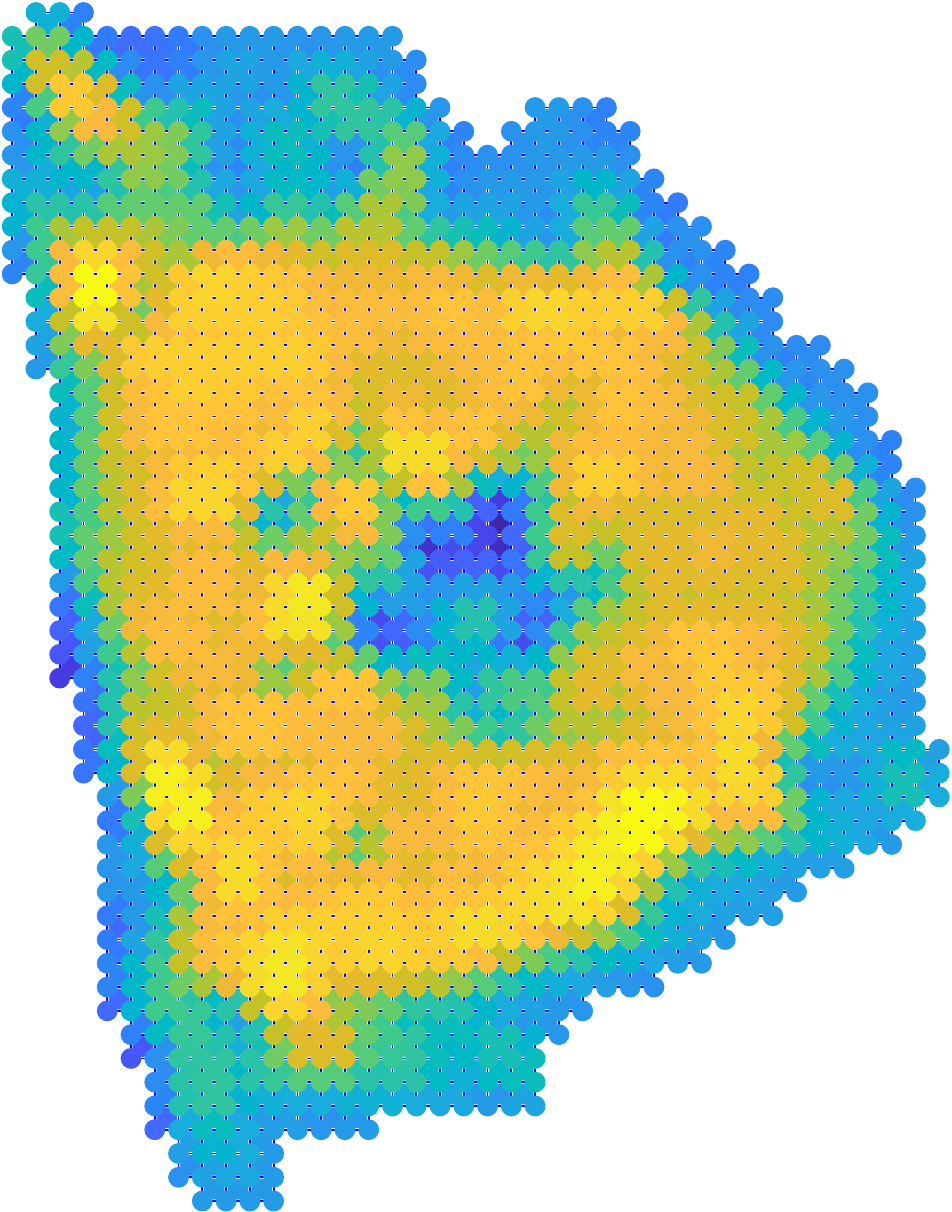}
    \includegraphics[width = 0.09\textwidth]{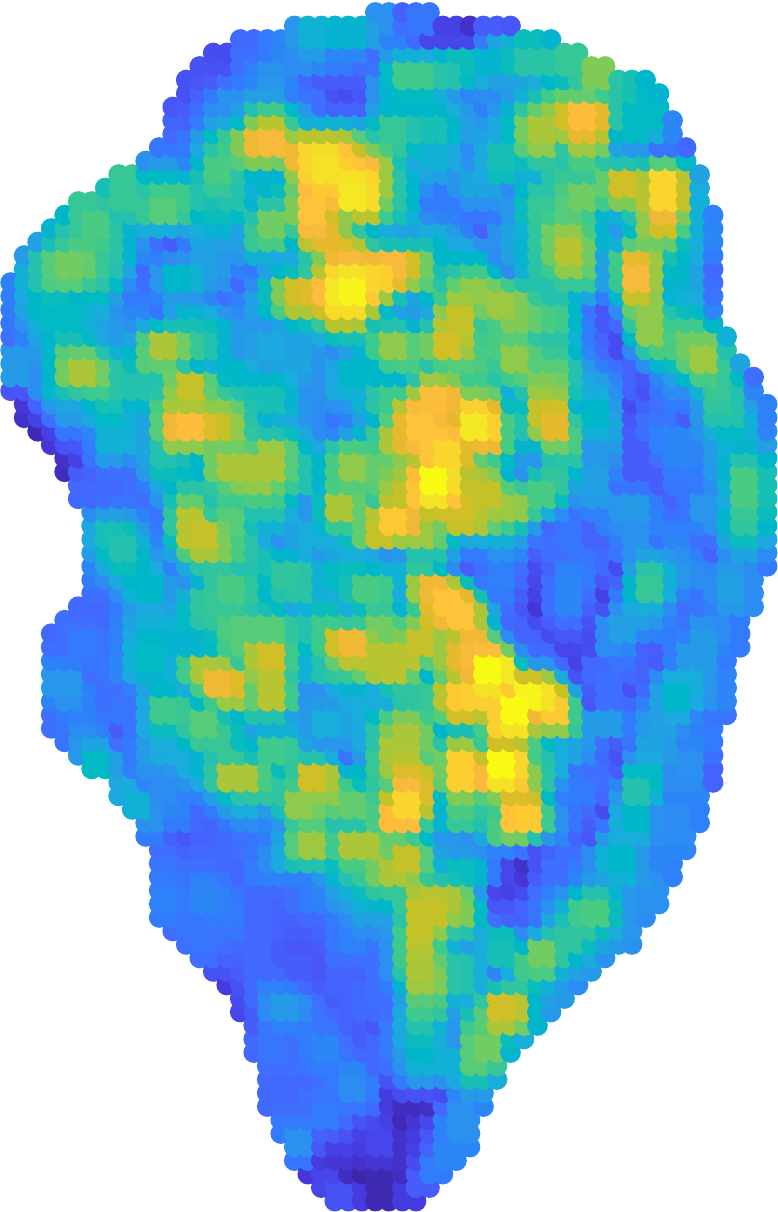}
    \caption{Weighted simplicial complex representations of tumors from the low survival time cluster in Table \ref{tab:exp2}.}
    \label{fig:tumor_cluster_outliers}
\end{figure}

These statistics suggest that the clusters are roughly characterized as low, medium and high survival. Figure \ref{fig:tumor_cluster_outliers} shows tumors from the low survival cluster; they are visually irregular in shape and intensity distribution, which explains their presence as a distinct cluster. To explore the data in more depth, we consider the clustering dendrogram with this cluster of tumors removed. Figure \ref{fig:tumor_dendrogram_6} shows this dendrogram on the remaining 58 tumors, with six highlighted clusters; mean and median survival times for patients in these clusters are shown in Table \ref{tab:exp3}. Inspecting the tumors in these clusters, one can observe various common qualitative shape and intensity features. For example, the tumors in the blue and cyan clusters both tend to have intensity patterns with a ring-like structure near the boundary. The tumors in the blue cluster tend to have higher irregularity in shape and/or intensity patterns, see Figure \ref{fig:tumor_cluster_cyan_and_blue}.

\begin{figure}[!t]
    \centering
    \includegraphics[width = 0.4\textwidth]{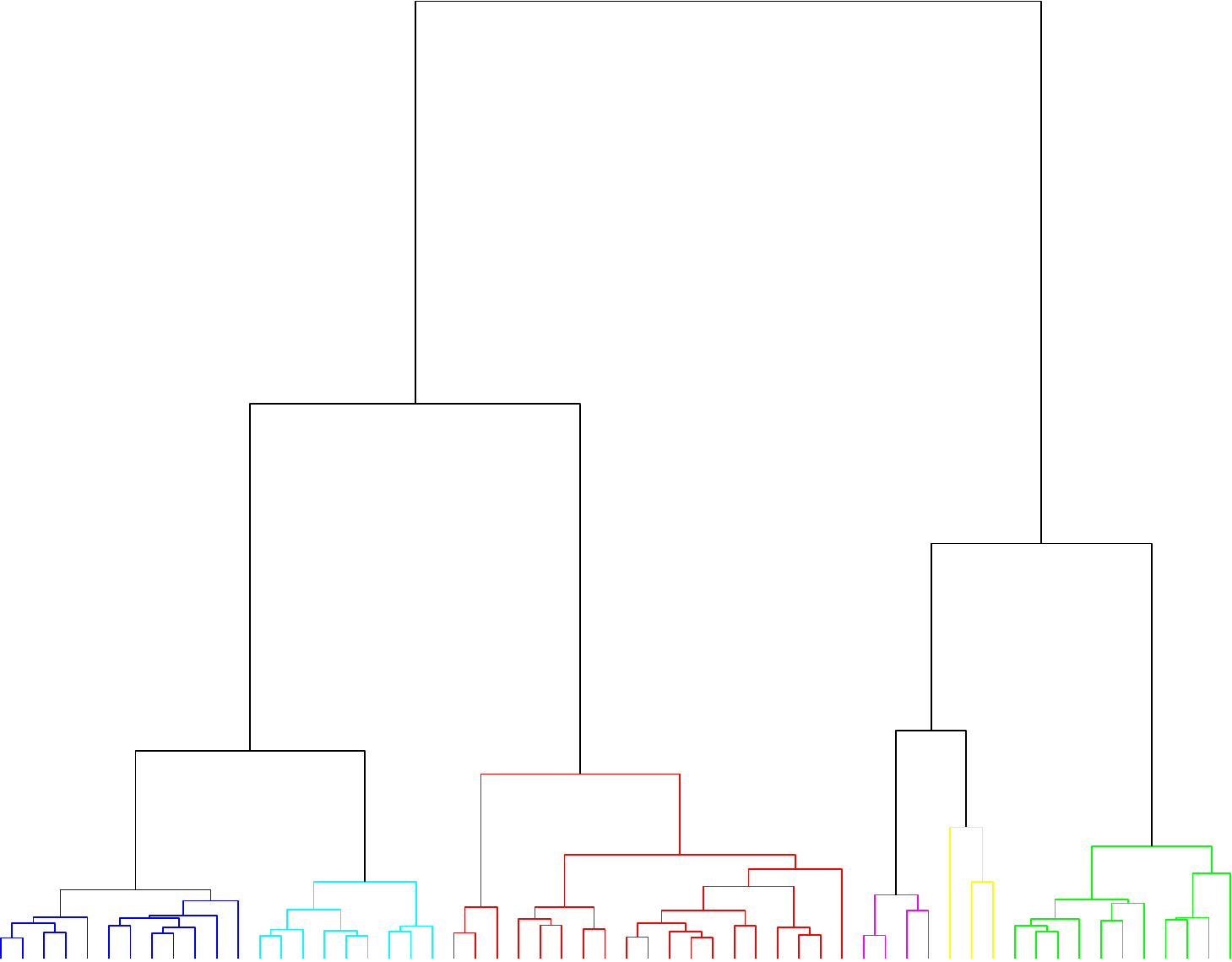}
    \caption{Clustering dendrogram for the tumor dataset with low survival cluster tumors removed.}
    \label{fig:tumor_dendrogram_6}
\end{figure}

\begin{table}[!t]
\caption{Clusterwise mean and median survival for Figure \ref{fig:tumor_dendrogram_6}.}
\label{tab:exp3}
\begin{center}
\begin{small}
\begin{tabular}{|c|c|c|c|c|c|c|}
\hline
&Blue&Cyan&Red&Magenta&Yellow&Green\\
\hline
Mean & 18.1 & 28.0 & 17.9 & 19.4 & 5.0 & 12.6 \\
\hline
Med. & 14.9 & 22.3 & 14.3 & 20.4 & 4.5 & 10.7 \\
\hline
\end{tabular}
\end{small}
\end{center}
\end{table}

\begin{figure}[!t]
    \centering
    \includegraphics[width = 0.09\textwidth]{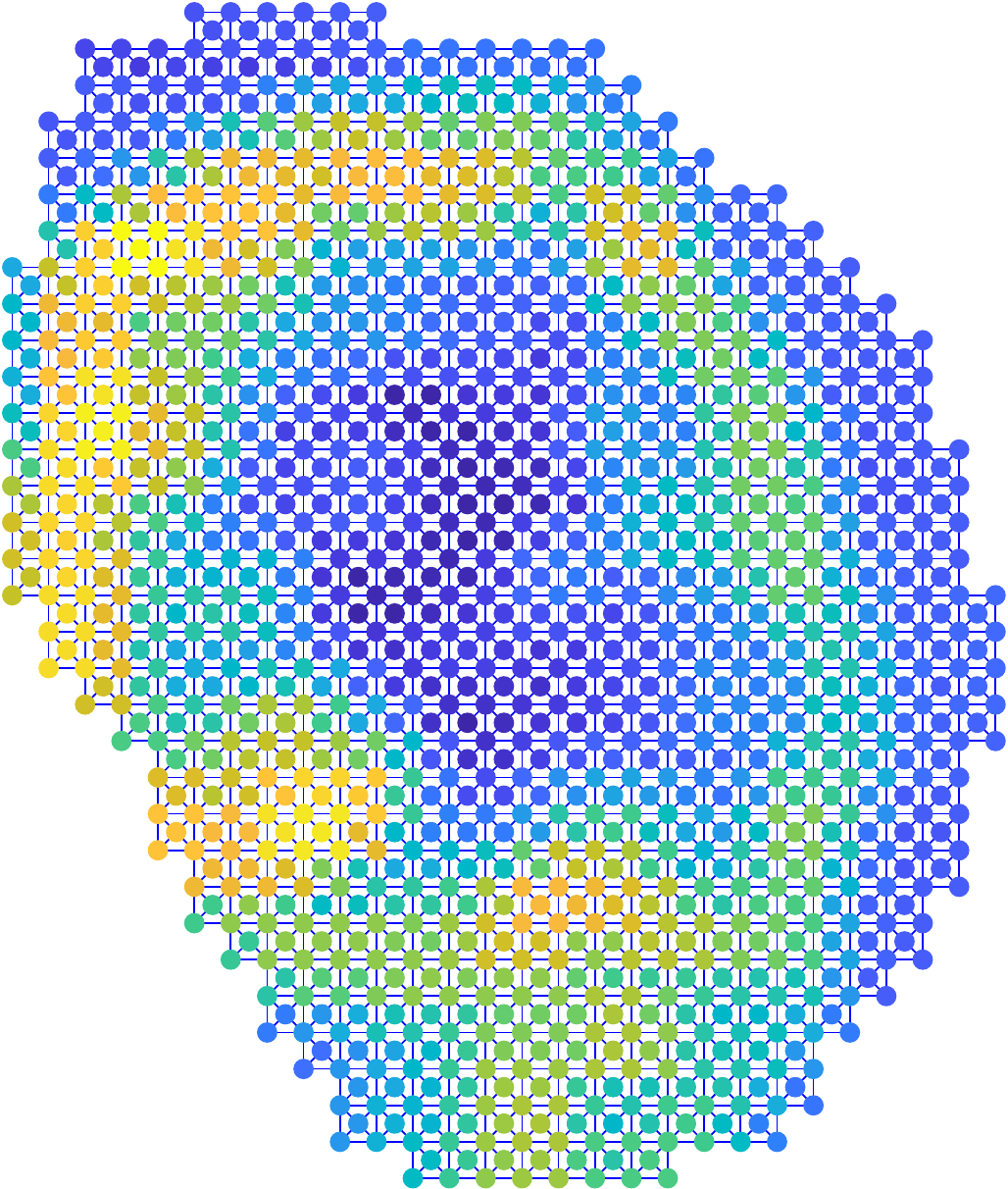}
    \includegraphics[width = 0.09\textwidth]{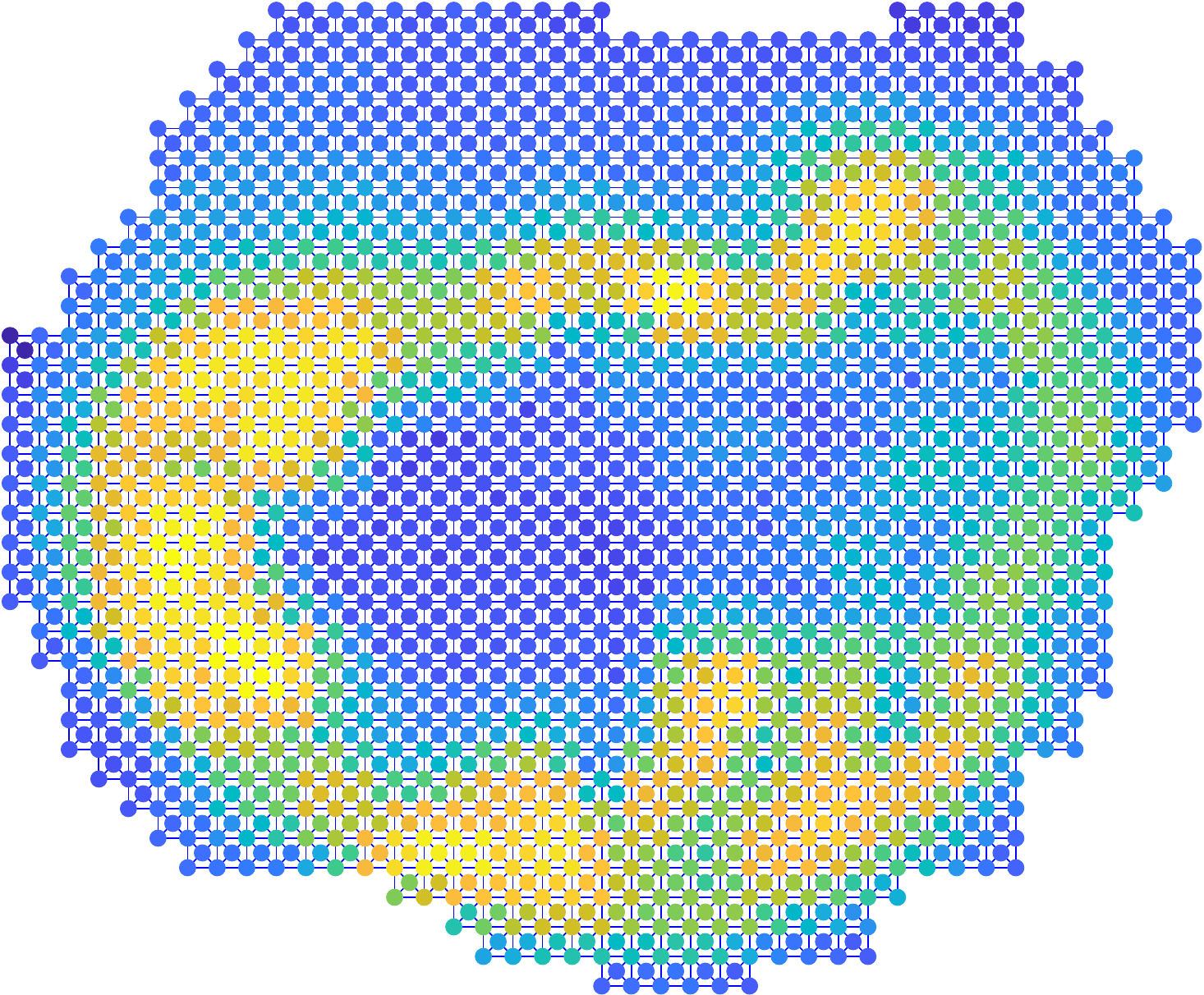}
    \includegraphics[width = 0.09\textwidth]{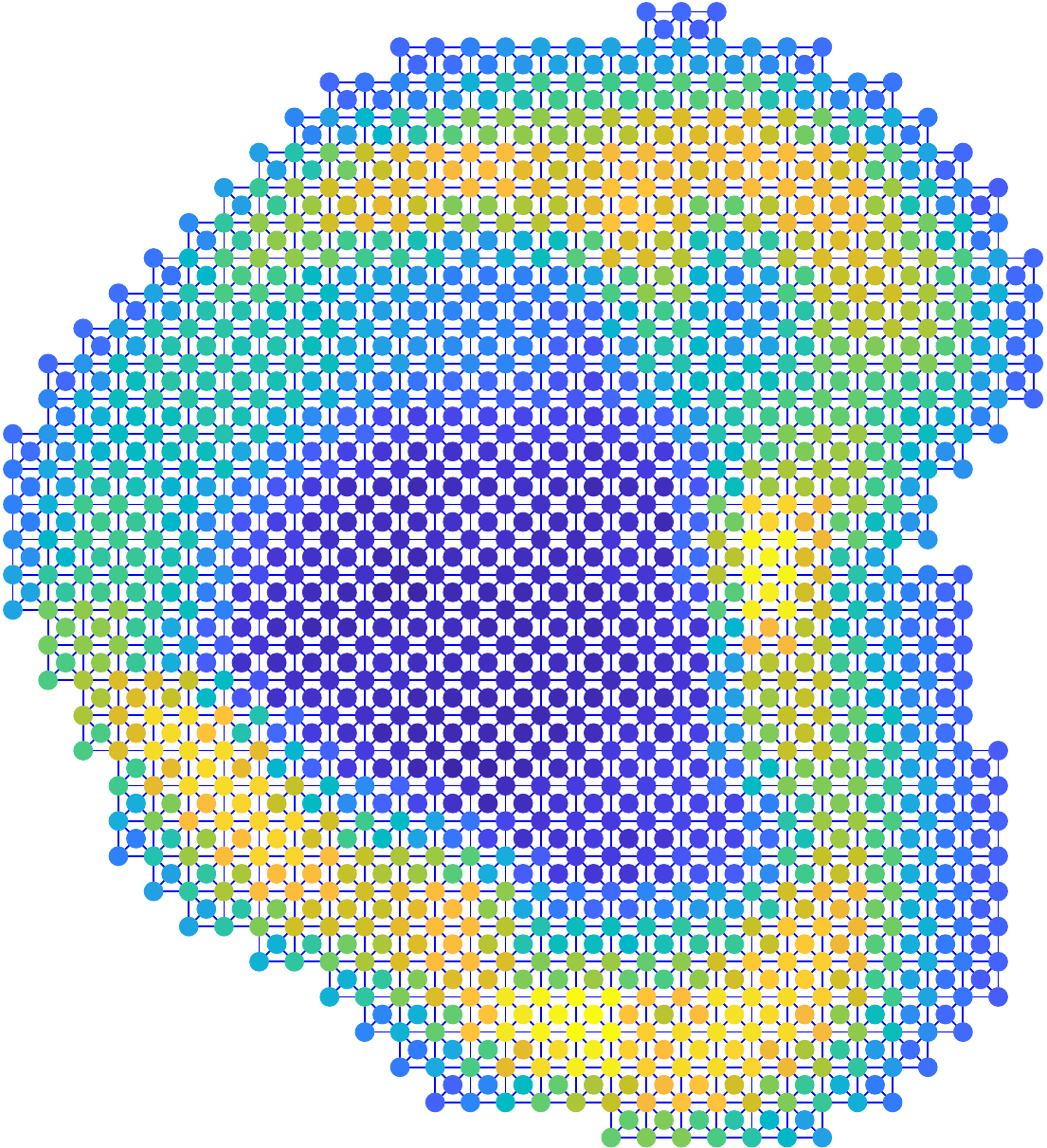}
    \includegraphics[width = 0.09\textwidth]{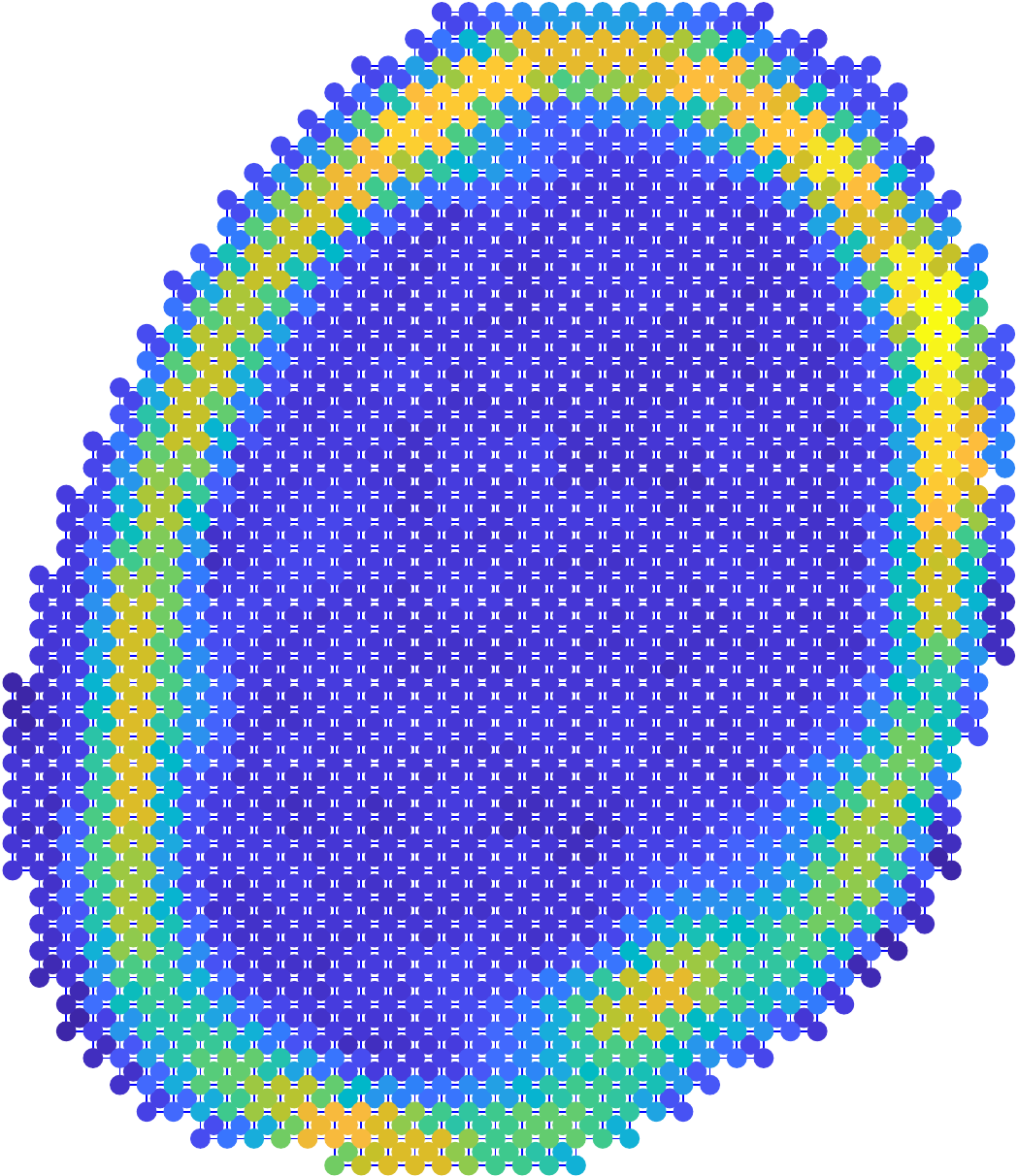}
    \includegraphics[width = 0.09\textwidth]{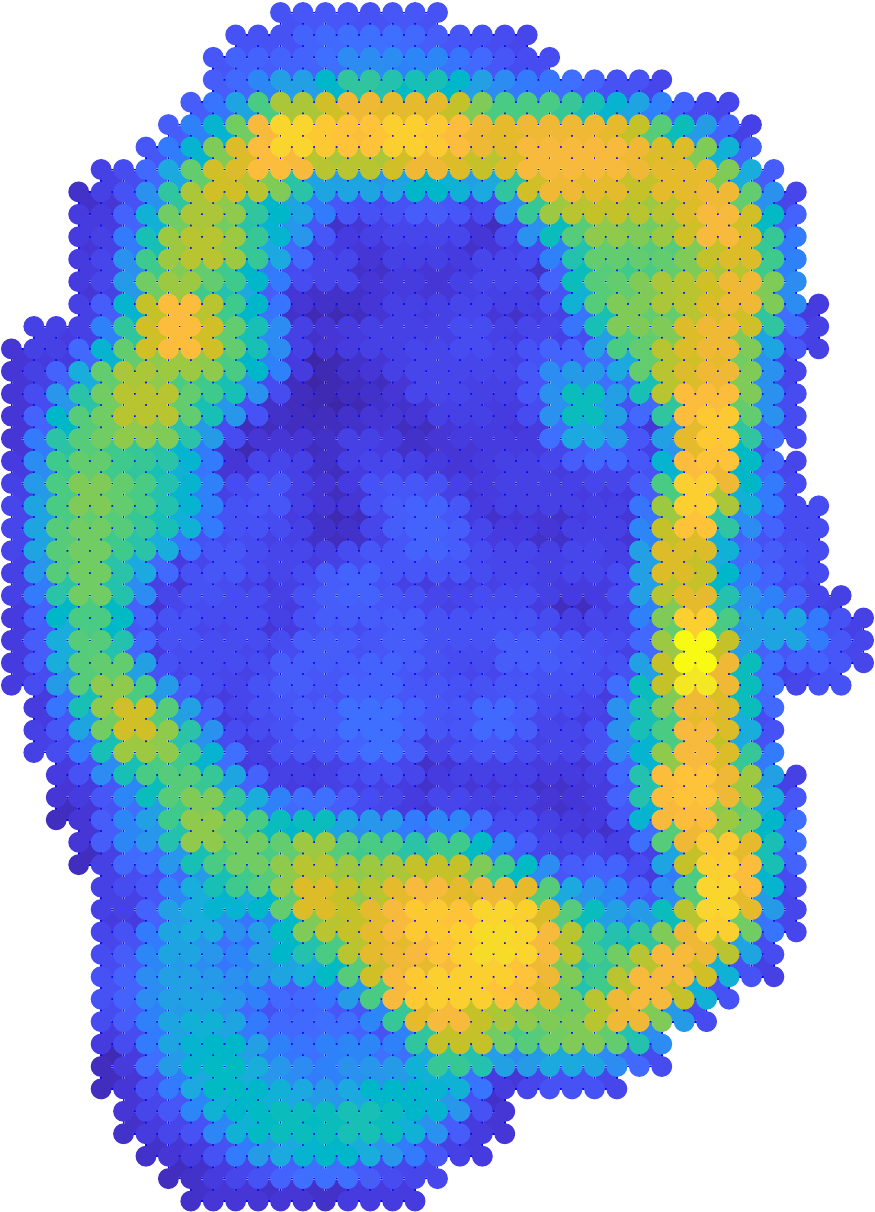}
    \includegraphics[width = 0.09\textwidth]{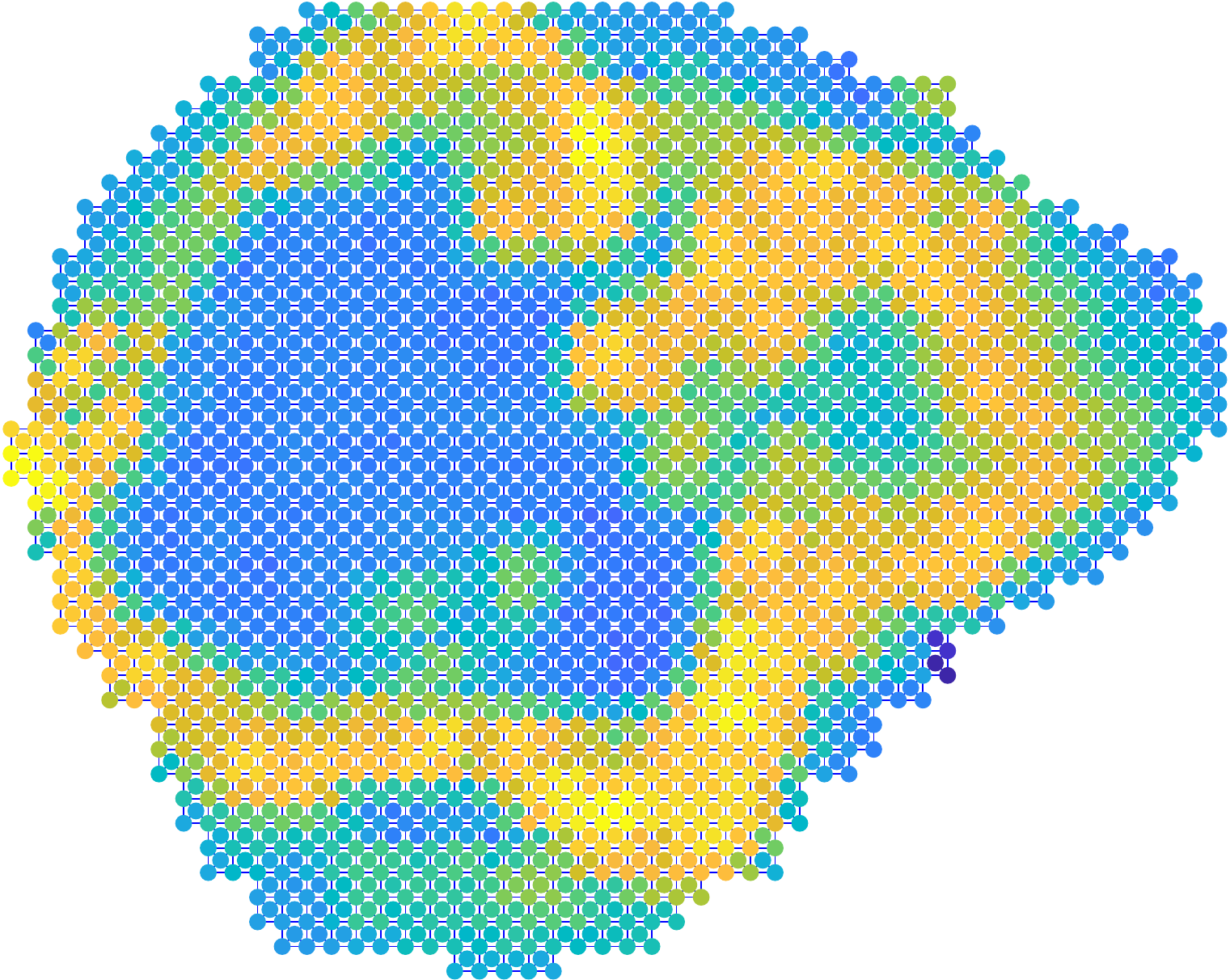}
    \includegraphics[width = 0.09\textwidth]{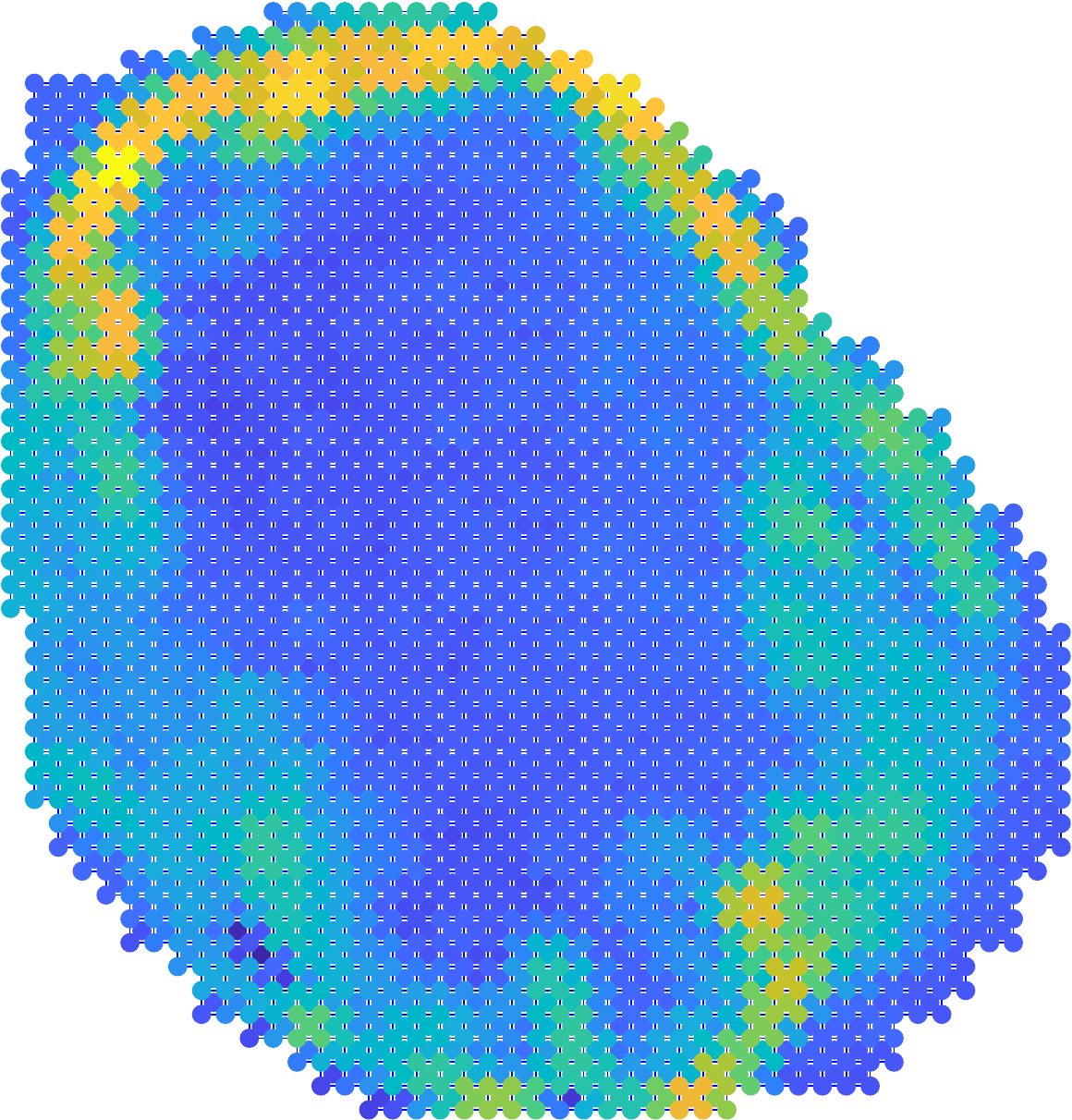}
    \includegraphics[width = 0.09\textwidth]{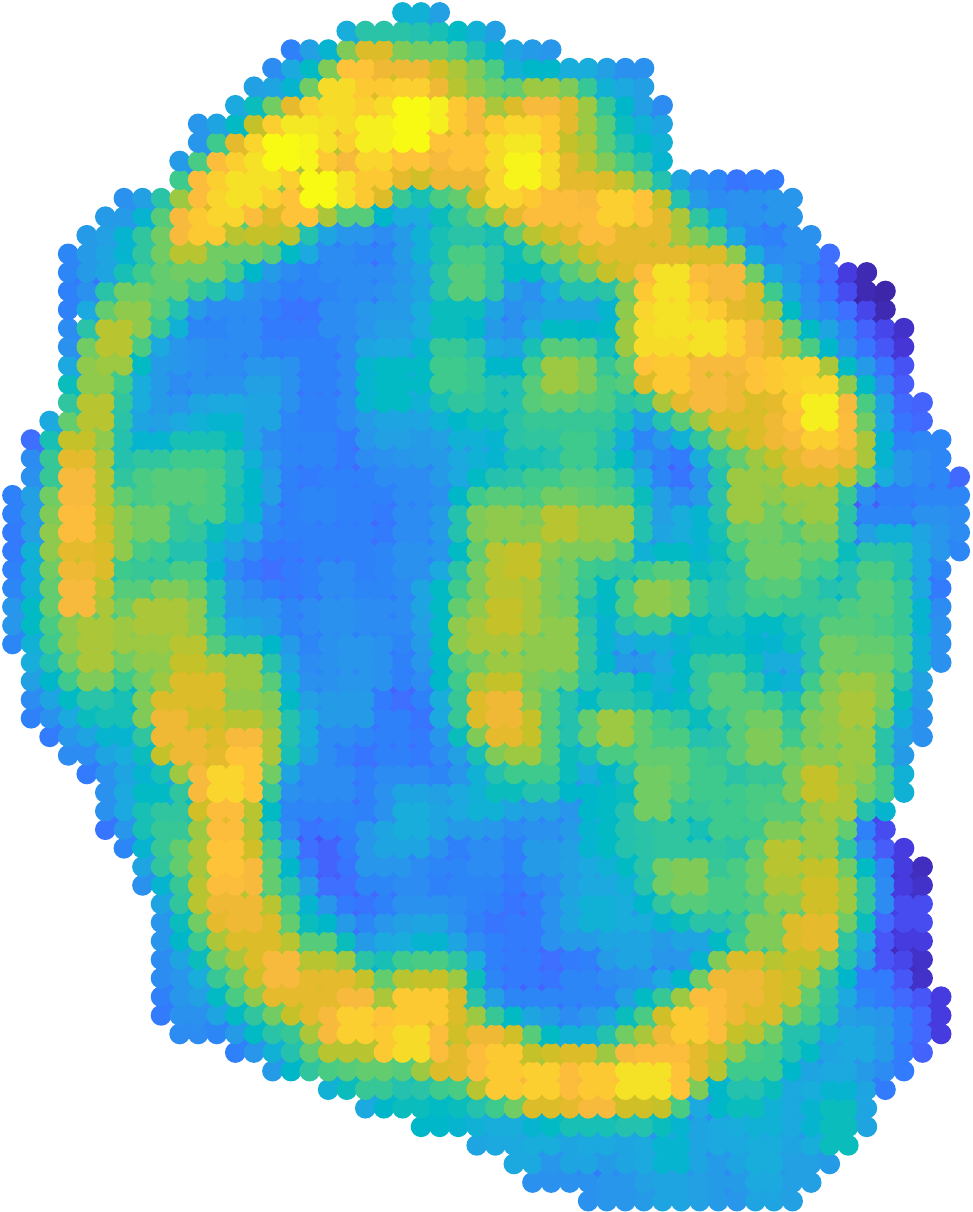}
    \includegraphics[width = 0.09\textwidth]{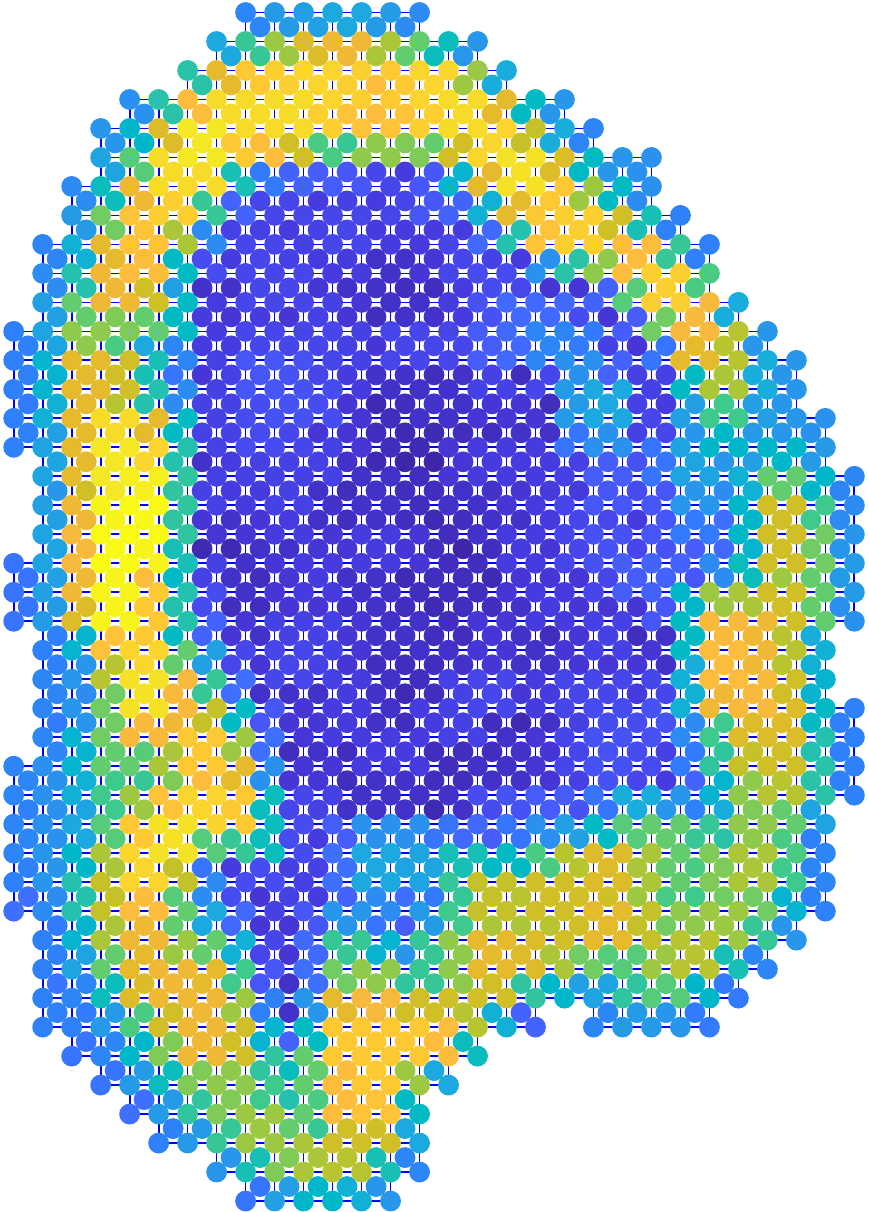}
    \includegraphics[width = 0.09\textwidth]{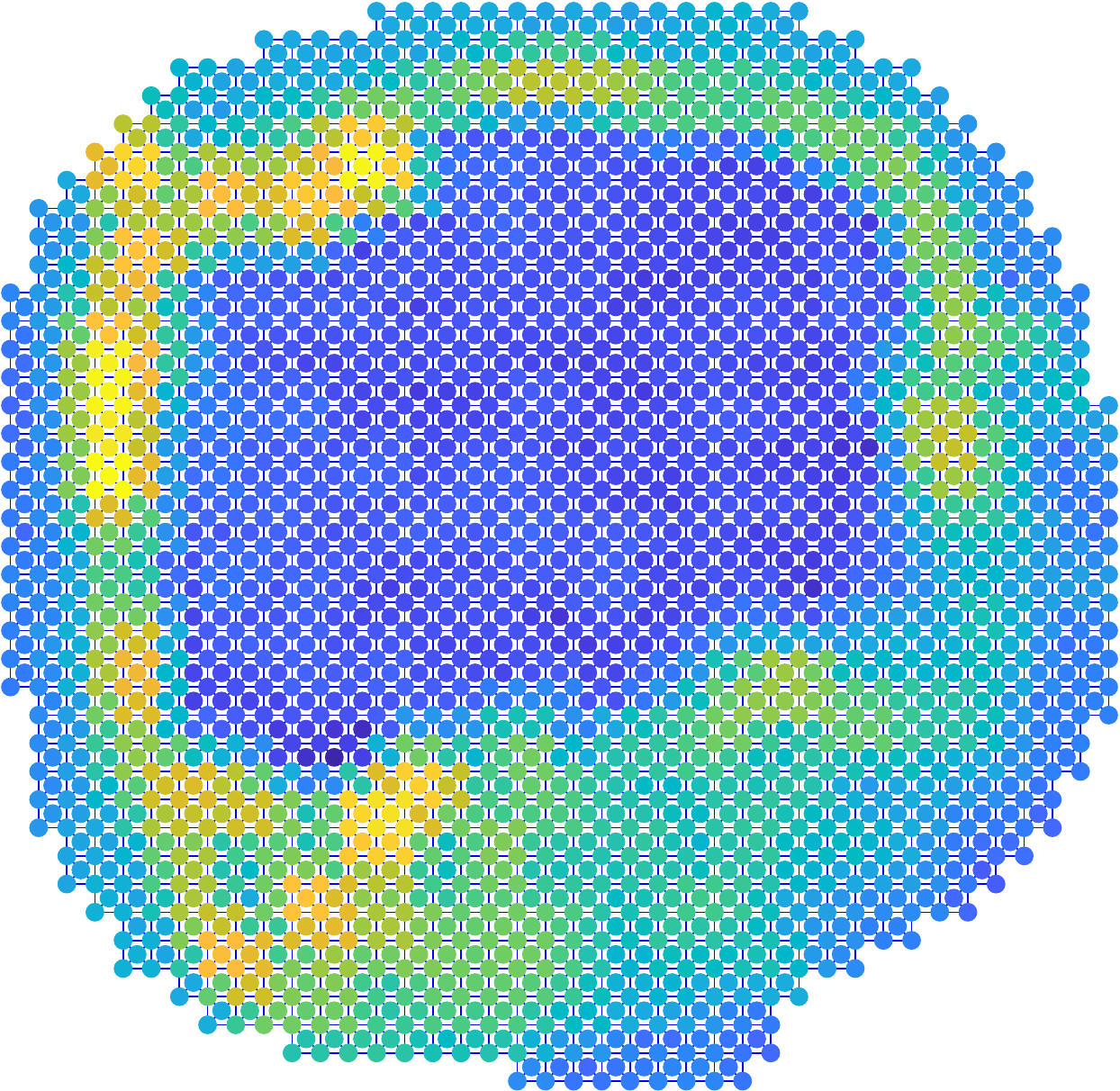}
    \caption{Samples of weighted simplicial complex representations of tumors from cyan (top) and blue (bottom) clusters of Figure \ref{fig:tumor_dendrogram_6}.}
    \label{fig:tumor_cluster_cyan_and_blue}
\end{figure}

\section{Future Work}\label{sec:conclusion}

Our work suggests several directions for future research. Driven by the qualitative distance-based clustering results presented here, we next plan to incorporate WECT representations into more sophisticated statistical models for tumor survival prediction. The WECT representation is flexible in the sense that it provides a summary of any weighted simplicial complex. We plan to apply this type of analysis to other shape data, such as weighted simplicial complexes representing annotated molecule shapes. On the theoretical side, there are several interesting questions left open. Principally, one could attempt to strengthen Theorem \ref{thm:injectivity_theorem} on injectivity of the WECT in several ways. In its current form, it is mainly a theoretical result and an implementation of an inversion algorithm would be desirable. A practical version of such a construction would only require information about weighted Euler curve measurements in finitely many directions, along the lines of results in \cite{curry2018many} on the ECT. It would also be interesting to have a quantitative version of the injectivity theorem; if WECTs of $(K_1,g_1)$ and $(K_2,g_2)$ are close in $L^2$ distance, does this imply that $(K_1,g_1)$ and $(K_2,g_2)$ are close in some resonable metric, such as Wasserstein distance (treating a normalization of $g_j$ as a probability measure supported on $K_j$)?

\noindent\textbf{Acknowledgments:} We thank Arvind Rao for sharing the GBM dataset. SK was partially supported by NSF DMS-1613054, NSF CCF-1740761, NSF CCF-1839252 and NIH R37-CA214955.

\pagebreak

{\small
\bibliographystyle{ieee_fullname}
\bibliography{egbib}
}

\end{document}